\newcommand{\cost}{\mathcal{COST}^n}
\newcommand{\s}{\mathcal{SUM}}
\newcommand{\tr}{\mathcal{TR}}
\newcommand{\si}{\mathcal{SC}}
\newcommand{\field}[1]{\mathbb{#1}}
\DeclareMathOperator{\PR}{\field{P}}             
\DeclareMathOperator{\E}{\field{E}}              
\def\N{\field{N}}                                
\def\R{\field{R}}                                
\def\F{\field{F}}                                
\def\PR{\mathop{\rm I\kern -0.20em P}\nolimits}  
\def\E{\mathop{\rm I\kern -0.20em E}\nolimits}   
\def\N{\mathop{\rm I\kern -0.20em N}\nolimits}   
\def\R{\mathop{\rm I\kern -0.20em R}\nolimits}   
\def\F{\mathop{\rm I\kern -0.20em F}\nolimits}   
\newtheorem{thm}{Theorem}[section]
\newtheorem{lem}[thm]{Lemma}
\newtheorem{ex}[thm]{Example}
\numberwithin{equation}{section}
\title[ Warranty claims]{Modeling total expenditure on warranty claims} 
\author[ A.\ Mitra ]{Abhimanyu Mitra }
\address{Abhimanyu Mitra\\School of OR\&IE, Cornell University,
Ithaca, NY-14853} \email{am492@cornell.edu}
\author[ S. I.\ Resnick ]{Sidney I. Resnick}
\address{Sidney I. Resnick\\School of OR\&IE, Cornell University,
Ithaca, NY-14853} \email{sir1@cornell.edu}
\keywords{Characteristic function, random measure, weak convergence, stable distribution} 
\thanks{S. I. Resnick and A. Mitra were partially supported by ARO
  Contract W911NF-07-1-0078  at Cornell University.}
\begin{document}

\begin{center}
\maketitle

\end{center}
\normalsize

\begin{abstract}

 We approximate the distribution of total expenditure of a retail company over warranty claims incurred in a fixed period $[0, T]$, say the following quarter. We consider two kinds of warranty policies, namely, the non-renewing free replacement warranty policy and the non-renewing pro-rata warranty policy. Our approximation holds under modest assumptions on the distribution of the sales process of the warranted item and the nature of arrivals of warranty claims. We propose a method of using historical data to statistically estimate the parameters of the approximate distribution. Our methodology is applied to the warranty claims data from a large car manufacturer for a single car model and model year.

 \end{abstract}

\section{Introduction} \label{intro}

Suppose, a retail company sells items each of which is covered by a warranty for a period $W.$  So, the company estimates future warranty costs over a fixed period $[0,T],$ say the following quarter, based on historical data on sales and warranty claims. Typically, the length of the period for which we estimate total warranty cost, i.e. $T,$ if the period we are considering is $[0, T]$, is much smaller than the warranty period $W.$ For example, for a car company, usually the warranty period $W$ is three years whereas $T$ is a quarter. We assume $2T < W.$

We consider two kinds of warranty policies, namely, the non-renewing free replacement warranty policy and the non-renewing pro-rata warranty policy. Under the first policy, the retail company  agrees to repair or replace the item in case of a failure within the warranty period $W$. Under the second policy, the retail company refunds a fraction of the purchase price if the item fails within the warranty period $W$. The fraction depends on the lifetime of the item and is applicable to non-repairable items such as automobile tires; see \citet[page 171, 133]{blischke:murthy:1994}.

The role and importance of warranty costs in the retail industry has increased considerably and a considerable amount of research estimates warranty costs; see \cite{amato:anderson:1976, ja:kulkarni, kalbfleisch:lawless:robinson:1991, kulkarni:resnick:2007,  majeske:2003, sahin:polatoglu}. For the non-renewing free replacement policy case, under highly structured assumptions on the sales process and the times of claims, \citet{kulkarni:resnick:2007} found a closed form expression for the Laplace transform of the total warranty cost for a quarter, allowing computation of quantiles. Rather than attempting a closed form solution of the Laplace transform, we study approximations of the distribution of total warranty cost in a quarter under fairly modest assumptions on the distribution of the sales process of the warranted item and the nature of arrivals of warranty claims. Depending on the distribution of the cost of individual claims, we approximate the distribution of total warranty cost by a normal or a stable distribution. Computation of quantiles by our method is relatively straightforward.  In the case of companies issuing non-renewing pro-rata warranties, we approximate the distribution of the total warranty cost by a normal distribution.

The advantage in approximating total warranty cost using our asymptotic results is that our method does not require strong assumptions on the sales process distribution or on the nature of arrival of claims, and hence is robust against model error. In practice, the times of sales may not fit the renewal or Poisson process models; see Section \ref{cardata}. Similar problems are faced when modeling times of claims and here also our method based on asymptotic results provides an alternative by doing away with the strict assumptions on the distribution of times of claims.

We discuss methods of estimating the parameters of the normal or stable distribution, which approximate the distribution of the total warranty cost in $[0, T]$. We apply our methods to the sales and warranty claims data from a large car manufacturer for a single car model and model year.

\subsection{Outline}

The following sections are designed as follows. Section \ref{notations} reviews some notation. In Section \ref{freereplacement}, we discuss the case of non-renewing free replacement warranty policy. Section \ref{prorata} discusses the case of the non-renewing pro-rata policy. Both kinds of warranty policies use the same assumption on the distribution of the sales process. In Section \ref{salesproc}, we show that many common models for sales processes satisfy our assumptions. In Section \ref{dataandestimation}, we propose a method of estimating parameters of the approximate distribution of the total warranty cost in $[0, T]$. Section \ref{cardata} applies  our methods to the sales and warranty claims data from a large car manufacturer for a single car model and a single model year. The paper closes with some concluding remarks about the applicability of our results and possible future directions. The proofs of the main results are deferred to Section \ref{sec:proofs}.

\subsection{Notation}\label{notations}

The point measure on $K \subset \R$ corresponding to the point $x$ is given by $\epsilon_{x},$ i.e. for any Borel set $A \subset K,$
\begin{align*}
\epsilon_{x} (A) = \left\{\begin{array}{cc} 1 & \hbox{if $x \in A$, }\\ 0 & \rm{otherwise}.
\end{array}\right.
\end{align*}
The set of all Radon point measures on $K \subset \R$ is denoted $M_p(K)$. Similarly, the set of all non-negative Radon measures on $K \subset \R$ is denoted $M_+(K)$.

The set of right continuous functions with left limits from $[-W, T]$ to $\R$ is denoted $D([-W, T])$ and the set of continuous functions from $[-W, T]$ to $\R$ is denoted $C([-W, T])$ \citep[page 80, 121]{billingsley:1999}. Endow $D([-W, T])$ with the Skorohod topology and $C([-W, T])$ with the uniform topology. 

The set of all one dimensional regularly varying functions with exponent of variation $\rho$ is written $RV_{\rho}$ \citep[page 24]{resnickbook:2007}. Also, we denote conditional expectation of Y conditioned on $X$ as $E^X[Y],$ i.e. $E^X[Y] = E[Y | X].$ 

For easy reference, we give a glossary of notation in Section \ref{notation}.

\section{Non-renewable free replacement warranty policy}\label{freereplacement}

The free replacement policy is the most widely used warranty \cite[page 131]{blischke:murthy:1994} and is used for items such as cars, consumer electronics, etc. Under this warranty, the retail company repairs or replaces the item in case of a failure within the warranty period $W$ \cite[page 133]{blischke:murthy:1994}. Typically, such policies are non-renewing. 

{\it{Sales process}}: If a warranty claim for an item comes in the period $[0, T]$, the item must be sold during the period $[-W, T]$. As a setting for our approximation procedure, imagine a family of models indexed by $n.$ Let $S_j^n$ be the time of sale of the $j$-th item in the period $[-W, T]$. The sales process $N^n(\cdot)$ is the point process
\begin{align*}
N^n(t) = \sum_j \epsilon_{S_j^n}([-W, t]) = |\{ j : -W \le S_j^n \le t\} |.
\end{align*}
We further assume that $N^n(\cdot)$ is a random element of $D([-W, T])$ \citep[page 121]{billingsley:1999}.

We define a Gaussian process $\left(N^{\infty}(t), t \in [-W, T] \right)$ having continuous paths, so $N^{\infty}(\cdot)$ is also a random element of $C([-W, T])$ \citep[page 80]{billingsley:1999}. Existence of such a process can be guaranteed by the Kolmogorov's continuity theorem \cite[page 14]{Oksendal:2003}. We assume that the sales process $N^n(\cdot)$, after suitable scaling and centering, converges in distribution as $n \to \infty$ to the limiting process $N^{\infty}( \cdot)$ in $D([-W, T])$.

{\it{Times of claims measures}}: In the $n$-th model, let $C_{j, i}^n$ be the time of the $i$-th claim for the $j$-th item sold, where we start the clock at the time of sale $S_j^n$ of the $j$-th item,  so that $S_j^n + C_{j, i}^n$ is the actual claim time. Assume for all $j,$ the points $\{ C_{j, i}^n, i = 1, 2, \cdots \}$ do not cluster. The times of claims measure for the $j$-th item $M^n_j (\cdot)$ is 
\begin{align}\label{defn_mjn}
 M^n_j (A) = \sum_i \epsilon_{C_{j, i}^n} (A), \hskip 1 cm A \in \mathcal{B}([0, W]),
\end{align}
where $\mathcal{B}([0, W])$ is the set of all Borel subsets of $[0, W]$ and $ M^n_j (\cdot) \in M_p([0, W])$. Recall, $W$ is the warranty period and so only claims in $[0, W]$ will be respected.

In the $n$-th model, assume the random measures $\{M_j^n(\cdot), j \ge 1\}$ are independent and identically distributed for all $j$. Moreover, assume the common distribution of the random measures remain the same for all $n$. We denote the generic random measure describing claim times as $M(\cdot),$ i.e. $M(\cdot) \stackrel{d}{=} M_j^n(\cdot)$ for all $j$ and all $n$. 

{\it{ Claim sizes}}: We assume that the claim amounts are independent of the times of claims measures $\{M_j^n(\cdot): j \ge 1 \}$ and the sales process $N^n(\cdot)$ and claim amounts for different claims are independent and identically distributed. 

We consolidate detailed assumptions in the following section.

\subsection{Assumptions} \label{freereplass}
\begin{enumerate}
\item \label{freerepl_sale} Suppose, $\nu(\cdot)$ is a non-decreasing function in $D([-W, T])$ which is continuous at the points $T-W$ and $0.$ The family of centered and scaled sales processes in $[-W, T]$ converges weakly to a continuous path Gaussian process $N^{\infty} (\cdot)$ in $D([-W, T])$; i.e. 
\begin{equation}\label{eqn:salesprocess}
\sqrt{n}\left( \frac{N^n(\cdot)}{n} - \nu(\cdot) \right) \Rightarrow N^{\infty}(\cdot).
\end{equation}
Denote the mean function of $N^{\infty}(t)$ as $\theta(t) = E[N^{\infty}(t)]$ and the covariance function as $\gamma(s,t) = Cov[N^{\infty}(s), N^{\infty}(t)].$
\item \label{freerepl_claim1} For each $n$, the times of claims measures $\{M_j^n(\cdot), j \ge 1\}$ corresponding to different items sold are independent and identically distributed and the distribution of $\{ M_j^n(\cdot): j \ge 1\}$ remains the same for all $n$. The random measure $M(\cdot)$ denotes a random element of $M_p([0, W])$ whose distribution is the same as the common distribution of $\{M_j^n(\cdot) : j \ge 1, n \ge 1 \}$, i.e. $M(\cdot) \stackrel{d}{=} M_j^n(\cdot)$ for all $j$ and all $n$. For each $n$, the random measures $\{ M_j^n(\cdot): j \ge 1 \}$ are all assumed to be independent of the sales process $N^n(\cdot)$.
\item \label{freerepl_claim1.5} The random measure $M(\cdot)$ is a Radon measure with no fixed atoms except possibly at $0$ and $W,$ i.e. for $0 < x < W,$ $P[M(\{x \}) = 0] = 1.$
\item \label{freerepl_claim2} We assume $M(\cdot)$ satisfies $E[ M^2( [0, W] ) ] < \infty$.
\item \label{size_iid} For each $n$, the claim amounts for different claims are independent and identically distributed. The common distribution of  the claim sizes does not change with $n$.
\item \label{ind_amount_time} For each $n$, the claim amounts are independent of the times of claims measures $\{ M_j^n(\cdot): j \ge 1 \}$ and the sales process $N^n(\cdot)$.
\end{enumerate}

\subsection{Asymptotic approximation of total warranty cost distribution} In the $n$-th model, denote the total number of claims for the $j$-th item sold, that arrived in the fixed period $[0, T]$ by $R^n_j$:
\begin{equation}\label{define_rnj}
R_j^n = \sum_i \epsilon_{ S_j^n + C_{j, i}^n} ([0, T])\epsilon_{C_{j, i}^n} ([0, W]),
\end{equation}
and the total number of claims in $[0, T]$ as $R^n$:
\begin{equation}\label{define_rn}
R^n = \sum_{\{ j: -W \le S_j^n \le T \}} \sum_i \epsilon_{ S_j^n + C_{j, i}^n} ([0, T])\epsilon_{C_{j, i}^n} ([0, W]) = \sum_{\{ j: -W \le S_j^n \le T \}} R^n_j.
\end{equation}

We require some notation to state the results. Let $r : [0, W] \rightarrow [0,1]$ be a non-negative non-increasing function such that $r(0) = 1$. Recall the random measure $M(\cdot)$ defined in Assumption \ref{freerepl_claim1} of Section \ref{freereplass} and denote its expectation by $m(\cdot) = E[M(\cdot)].$ Then, $r(y)M(dy)$ is a random Radon measure on $[0, W]$ with expectation $\tilde m(\cdot)$, such that for all Borel sets $A$ of $[0, W],$ 
\begin{equation}\label{define_tilde_m}
\tilde m(A) = E\left[\int_A r(y)M(dy)\right].
\end{equation}

Now, define for $ x \in [-W, T],$
\begin{align}\label{define_tilde_f}
\delta(x) &= \left \{\begin{array}{ll}  \int_{[0, T-x] } r(y) M(dy), & \hbox{if $ 0 \le x \le T$},\\
 \int_{[-x, T-x] } r(y) M(dy), & \hbox{if $ T - W < x < 0$},\\
  \int_{[-x, W] } r(y) M(dy), & \hbox{if $ -W \le x \le T -W$}.\end{array} \right.
  \end{align}
Note that $\delta(\cdot)$ is a random function, whose interpretation depends on the kind of warranty policy. In the free replacement warranty policy, where $r \equiv 1$, $\delta(x)$ gives the number of claims in $[0, T]$ for an item sold at time $x$, i.e. $P[\delta(x) \in \cdot ] = P[R^n_1 \in \cdot | S^n_1 = x]$. The point-wise expectation and variance of $\delta(\cdot)$ are given by
  \begin{align}\label{define_f1}
  f_1(x)  = E [\delta (x)]   =  \left \{\begin{array}{ll} \tilde m([0, T-x]), & \hbox{if $ 0 \le x \le T$},\\
 \tilde m([-x, T-x]), & \hbox{if $ T - W < x < 0$},\\
  \tilde m([-x, W]), & \hbox{if $ -W \le x \le T -W$},\end{array} \right. 
\end{align}
and
\begin{align}\label{define_f2}
f_2(x) = Var[ \delta (x)].
\end{align}

Now, we define a function $\chi: D([-W, T]) \mapsto \R^{[0, W]}$ by 
\begin{align}\label{defineg}
  \chi(x)(u) = x(T-u) - x((-u)-), \hskip 1cm x \in D([-W, T]).
  \end{align}
Recall the Gaussian process $N^{\infty}(\cdot)$ in \eqref{eqn:salesprocess}. The Gaussian random variable $\int_{[0, W]} \chi(N^{\infty})(u) \tilde m(du)$ has expectation $\tilde \mu$  and variance $\tilde \sigma^2$ given by
\begin{align}\label{dotmudotsigma}
\tilde \mu &= \int_{[0, W]} E[ \chi \left(N^{\infty} \right)(u)] \tilde m(du) = \int_{[0, W]} E[ \chi \left(N^{\infty} \right)(u)] r(u) m(du),\nonumber \\
\tilde \sigma^2 &= \int_{[0, W]} \int_{[0, W]} Cov\left[ \chi \left(N^{\infty} \right)(u), \chi \left(N^{\infty} \right)(v) \right] \tilde m(du)\tilde m(dv) \nonumber \\
&= \int_{[0, W]} \int_{[0, W]} Cov\left[ \chi \left(N^{\infty} \right)(u), \chi \left(N^{\infty} \right)(v) \right] r(u)r(v) m(du) m(dv).
\end{align}

In the non-renewing free replacement policy, we choose $r(t) \equiv 1$. Hence, the measure $\tilde m$ defined in \eqref{define_tilde_m} coincides with $m(\cdot) = E[M(\cdot)].$ Similar simplifications occur in the definitions of $\delta$, $f_1$ and $f_2,$ as defined in \eqref{define_tilde_f}, \eqref{define_f1} and \eqref{define_f2} respectively.  The random function $\delta$ when $r \equiv 1$, is 
\begin{align}\label{defnsimpletildef}
\delta(x) &= \left \{\begin{array}{cl} 
M([0, T -x]), & \hbox{if $0 \le x \le T$},\\
M([ -x, T-x]), & \hbox{if $T- W < x < 0$},\\ 
M([ -x, W]), & \hbox{if $-W \le x \le T-W$},\end{array} \right.
\end{align}
and the expectation and variance are given by
  \begin{align}\label{define_simplef1}
  f_1(x)  = E [\delta(x)]   =  \left \{\begin{array}{ll} m([0, T-x]), & \hbox{if $ 0 \le x \le T$},\\
 m([-x, T-x]), & \hbox{if $ T - W < x < 0$},\\
 m([-x, W]), & \hbox{if $ -W \le x \le T -W$},\end{array} \right. 
\end{align}
and
\begin{align}\label{define_simplef2}
f_2(x) = Var[ \delta (x)].
\end{align}

From now on, till the end of Section \ref{freereplacement}, we use $\delta$, $f_1$ and $f_2$ to mean these simplified versions of them. We define two constants $c_1$ and $c_2$ as
\begin{align}\label{candc2freereplacement}
c_1 = \int_{[-W, T]} f_1(x)\nu(dx),\hskip 1 cm c_2 = \int_{[-W, T]}f_2(x) \nu(dx),
\end{align}
where $\nu(\cdot), f_1(\cdot)$ and $f_2(\cdot)$ are given in \eqref{eqn:salesprocess}, \eqref{define_simplef1} and \eqref{define_simplef2} respectively.
Also, since we chose $r(t) = 1$ for all $0 \le t \le W,$ the parameters $\tilde \mu$ and $\tilde\sigma^2$ defined in \eqref{dotmudotsigma} takes the simplified forms
\begin{align}\label{tildemutildesigma}
\tilde \mu &= \int_{[0, W]} E[ \chi \left(N^{\infty} \right)(u)] m(du),\nonumber \\
\tilde \sigma^2 &= \int_{[0, W]} \int_{[0, W]} Cov\left[ \chi \left(N^{\infty} \right)(u), \chi \left(N^{\infty} \right)(v) \right]m(du) m(dv).
\end{align}

\begin{thm} \label{numberclaims} 
\rm{Under Assumptions \ref{freerepl_sale}-\ref{freerepl_claim2} of Section \ref{freereplass}, the total number of claims $R^n$ is asymptotically normal; i.e. $\sqrt{n}\left( \frac{R^n}{n} - c_1 \right) \Rightarrow \mathcal{N}( \tilde \mu , c_2 + \tilde \sigma^2 )$, 
where $\mathcal{N}( a, b)$ is the normal distribution with mean $a$ and variance $b$, $c_1$ and $c_2$ are given in \eqref{candc2freereplacement} and $\tilde \mu$ and $\tilde \sigma^2$ are given in \eqref{tildemutildesigma}.
}
\end{thm}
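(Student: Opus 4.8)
The plan is to decompose $\sqrt n\left(R^n/n - c_1\right)$ into a term recording the conditional fluctuation of the claims given the sales and a term that is a functional of the sales process alone, and then to recombine the two pieces through a conditional characteristic function argument. Write $\mathcal F_n = \sigma\left(N^n(\cdot)\right)$ for the $\sigma$-field generated by the sales process. By Assumptions \ref{freerepl_claim1} and \ref{ind_amount_time}, conditionally on $\mathcal F_n$ the variables $\{R^n_j\}$ (the index $j$ ranging over items sold in $[-W,T]$) are independent with $R^n_j \mid \mathcal F_n \stackrel{d}{=} \delta(S^n_j)$, so that $E[R^n_j \mid \mathcal F_n] = f_1(S^n_j)$ and $Var[R^n_j \mid \mathcal F_n] = f_2(S^n_j)$. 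Accordingly I would split
\begin{align*}
\sqrt n\left(\frac{R^n}{n} - c_1\right) &= A^n + B^n, \\
A^n &= \frac{1}{\sqrt n}\sum_j \left(R^n_j - f_1(S^n_j)\right), \\
B^n &= \sqrt n\left(\frac{1}{n}\sum_j f_1(S^n_j) - c_1\right),
\end{align*}
where $B^n$ is $\mathcal F_n$-measurable and $A^n$ carries the conditional randomness of the claims.

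For the sales term the first step is to exhibit $B^n$ as a continuous functional of the rescaled sales process. Interchanging the sum over sales with the integral defining $f_1$, one checks the identity $\sum_j f_1(S^n_j) = \int_{[0,W]} \chi(N^n)(u)\, m(du)$, since an item sold at $x$ with claim age $u$ contributes to $[0,T]$ exactly when $-u \le x \le T-u$, and $N^n([-u,T-u]) = N^n(T-u) - N^n((-u)-) = \chi(N^n)(u)$. The same computation gives $c_1 = \int_{[0,W]}\chi(\nu)(u)\,m(du)$, the continuity of $\nu$ at $0$ and $T-W$ in Assumption \ref{freerepl_sale} being what makes the left-limits in $\chi$ harmless at the break-points of $f_1$. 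Since $\chi$ is linear in its function argument,
\begin{align*}
B^n = \int_{[0,W]} \chi\left(\sqrt n\left(\frac{N^n}{n} - \nu\right)\right)(u)\, m(du) = \Phi\left(\sqrt n\left(\frac{N^n}{n} - \nu\right)\right), \qquad \Phi(x) := \int_{[0,W]} \chi(x)(u)\, m(du).
\end{align*}
The functional $\Phi$ is continuous at every $x \in C([-W,T])$: Skorohod convergence to a continuous limit is uniform, which forces $\chi(x_k)(u) \to \chi(x)(u)$ uniformly in $u$, and dominated convergence (with $m$ finite) passes to the integral. As $N^\infty$ has continuous paths almost surely, the continuous mapping theorem applied to \eqref{eqn:salesprocess} yields $B^n \Rightarrow \Phi(N^\infty) = \int_{[0,W]}\chi(N^\infty)(u)\,m(du)$, which is Gaussian with mean $\tilde\mu$ and variance $\tilde\sigma^2$ by \eqref{tildemutildesigma}.

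For the claim term I would establish a conditional central limit theorem for $A^n$ given $\mathcal F_n$. Conditionally the summands $R^n_j - f_1(S^n_j)$ are independent and centred, with conditional variances summing to $n^{-1}\sum_j f_2(S^n_j) = n^{-1}\int_{[-W,T]} f_2(x)\,N^n(dx)$, which converges in probability to $c_2 = \int_{[-W,T]} f_2(x)\,\nu(dx)$ because $N^n/n \Rightarrow \nu$ is deterministic. The Lindeberg condition holds because $|R^n_j - f_1(S^n_j)| \le M^n_j([0,W]) + m([0,W])$ has a finite second moment uniformly in $j$ and $n$ by Assumption \ref{freerepl_claim2}, so the truncated second moments are bounded by a deterministic $o(1)$ times $N^n(T)/n = O_P(1)$. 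Expanding the conditional characteristic function $E[e^{itA^n}\mid\mathcal F_n] = \prod_j E[e^{it(\delta(S^n_j)-f_1(S^n_j))/\sqrt n}]$ to second order then gives $E[e^{itA^n}\mid\mathcal F_n] \to e^{-t^2 c_2/2}$ in probability.

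Finally I would recombine the pieces. Conditioning on $\mathcal F_n$ and using that $B^n$ is $\mathcal F_n$-measurable,
\begin{align*}
E\left[e^{it(A^n + B^n)}\right] = E\left[e^{itB^n}\, E\left[e^{itA^n}\mid \mathcal F_n\right]\right].
\end{align*}
Writing $E[e^{itA^n}\mid\mathcal F_n] = e^{-t^2 c_2/2} + o_P(1)$ and noting that $e^{itB^n}$ is bounded by $1$ in modulus, a bounded-convergence argument together with $B^n \Rightarrow \mathcal N(\tilde\mu,\tilde\sigma^2)$ gives $E[e^{it(A^n+B^n)}] \to e^{-t^2c_2/2}\, e^{it\tilde\mu - t^2\tilde\sigma^2/2} = \exp\left(it\tilde\mu - \tfrac{t^2}{2}(c_2 + \tilde\sigma^2)\right)$, the characteristic function of $\mathcal N(\tilde\mu, c_2 + \tilde\sigma^2)$; Lévy's continuity theorem then finishes the proof. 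I expect the main obstacle to be precisely this coupling of the two limits: the conditional CLT for $A^n$ holds only in probability, so one must argue carefully that the residual term, multiplied by $e^{itB^n}$ and integrated, does not destroy the asymptotic independence of the claim-fluctuation and the sales-fluctuation. A secondary delicate point is verifying the almost sure continuity of $\Phi$ on the Skorohod space at the break-points $0$ and $T-W$, which is exactly where the continuity hypotheses on $\nu$ and the absence of fixed atoms of $M$ away from $\{0,W\}$ (Assumptions \ref{freerepl_sale} and \ref{freerepl_claim1.5}) enter.
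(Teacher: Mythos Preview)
Your proposal is correct and follows essentially the same route as the paper: the decomposition $A^n+B^n$ is the paper's $X^n+Y^n$, your Fubini identity for $B^n$ is Lemma~\ref{use_sales_conv}\eqref{firstpart}, your conditional-CLT step for $A^n$ is the content of Lemma~\ref{prod_approx_lemma} together with Lemma~\ref{use_sales_conv}\eqref{secondpart}, and your recombination via $E[e^{itB^n}E[e^{itA^n}\mid\mathcal F_n]]$ is exactly the paper's Step~3. Two small remarks: Assumption~\ref{ind_amount_time} is not needed here (claim sizes play no role in $R^n$), and the convergence $n^{-1}\int f_2\,dN^n \to c_2$ requires the same care about the possible discontinuities of $f_2$ at $0$ and $T-W$ that you correctly flag for $\Phi$.
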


Let, $\cost([0,T])$ be the total warranty cost during $[0,T]$ in the $n$-th model. Let $\{ X_i \}$ be iid with common distribution $F$ representing claim sizes in $[0, T]$. Denote, $\s_j = \sum_{i =1}^j X_i$ for all $j \ge 1$. Then, $\cost([0, T]) = \sum_{i = 1}^{R^n} X_i = \s_{R^n}$. 

The distribution $F$ of claim sizes is modeled as having a finite or infinite variance. Distributions having infinite variance are often assumed to have regularly varying tails \citep[page 344]{bingham:goldie:teugels:1987}. When $F$ has infinite variance, we assume $\bar F = 1- F \in RV_{-\alpha}, \hskip 0.2 cm 0 < \alpha < 2.$ 

The following theorem allows us to approximate the distribution of $\cost([0,T])$ based on the assumption we make about the claim size distribution $F$. 

\begin{thm}\label{mainthm}
\rm{
Under Assumptions \ref{freerepl_sale}-\ref{ind_amount_time} of Section \ref{freereplass}, we approximate the total cost as follows:
\begin{enumerate}
\item \label{normalthm}
 Suppose, the claim size distribution $F$ is such that $V = \int x^2F(dx) - {\left(\int xF(dx) \right)}^2 < \infty$. Then, as $n \to \infty$,
 \begin{equation}\label{normalapprox1}
 \frac{\cost([0,T]) - nc_1E}{\sqrt{n V} }\Rightarrow \mathcal{N}(\frac{E}{\sqrt{V}}\tilde \mu, c_1 + \frac{E^2}{V}(c_2 + \tilde \sigma^2) ),
 \end{equation}
 where $\mathcal{N}( \cdot, \cdot)$, $c_1$, $c_2$, $\tilde \mu$ and $\tilde \sigma^2$ are the same as in Theorem \ref{numberclaims} and $E = \int x F(dx)$.
 
 \item \label{stablemeanexists} Suppose, the claim size distribution $F$ is such that $\bar F (x) \in RV_{ - \alpha}$, $1 < \alpha < 2.$ Define, $b(x) = {\left(\frac{1}{1 -F}\right)}^{\leftarrow}(x)$.
Then, as $n \to \infty$,
 \begin{equation}\label{stableapprox2}
 \frac{\cost([0,T]) - nc_1E}{b(n)}\Rightarrow c_1^{\frac{1}{\alpha}}Z_{\alpha}(1),
 \end{equation}
 where $c_1$ is the same as in Theorem \ref{numberclaims}, $E = \int x F(dx)$ and $Z_{\alpha}(\cdot)$ is an $\alpha$-stable L\'{e}vy motion with $Z_{\alpha}(1)$ having characteristic function of the form
 \begin{align}\label{charstablemean}
 E\left[ \exp ( i \tau Z_{\alpha}(1) ) \right] = \exp \left( \int_0^{\infty} ( e^{i \tau x} - 1 - i \tau x) \alpha x^{-\alpha -1} dx \right).
 \end{align}
 
 \item \label{stablethm} Suppose, the claim size distribution $F$ is such that $\bar F (x) \in RV_{ - \alpha}$, $0 < \alpha \le 1$. Define, $b(x) = {\left(\frac{1}{1 -F}\right)}^{\leftarrow}(x)$ and $e(x) = \int_0^{b(x)} x F(dx)$. Then, as $n \to \infty$,
 \begin{equation}\label{stableapprox1}
 \frac{\cost([0,T]) - nc_1^{ \frac{1}{\alpha} }e(n)}{b(n)}\Rightarrow Z_{\alpha}(c_1) + 1_{\{ \alpha =1 \} } c_1\log c_1,
 \end{equation}
 where $c_1$ is the same as in Theorem \ref{numberclaims} and $Z_{\alpha}(\cdot)$ is an $\alpha$-stable L\'{e}vy process with $Z_{\alpha}(c_1)$ having characteristic function of the form
 \begin{align}\label{charstablenomean}
 E\left[ \exp ( i \tau Z_{\alpha}(c_1) ) \right] = \exp \left[ c_1 \left( \int_1^{\infty} ( e^{i \tau x} - 1) \alpha x^{-\alpha -1} dx + \int_0^1 ( e^{i \tau x} - 1 - i \tau x) \alpha x^{-\alpha -1} dx \right)\right].
 \end{align}
 
 \end{enumerate}
 }
\end{thm}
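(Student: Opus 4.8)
The plan is to exploit the structural observation, guaranteed by Assumption~\ref{ind_amount_time}, that the i.i.d.\ claim sizes $\{X_i\}$ are independent of the whole counting mechanism and hence of the random index $R^n$. Thus $\cost([0,T])=\s_{R^n}$ is a random sum whose index $R^n$ is independent of the summands, and the natural tool is the characteristic function obtained by conditioning on $R^n$. Writing $\phi(s)=\int e^{isx}F(dx)$ for the common characteristic function of the claim sizes and letting $a_n$, $d_n$ denote the centering and scaling in each case, independence gives
\begin{equation*}
E\left[\exp\left(i\tau\,\frac{\cost([0,T])-a_n}{d_n}\right)\right]=E\left[\exp\left(-\frac{i\tau a_n}{d_n}\right)\left(\phi\!\left(\frac{\tau}{d_n}\right)\right)^{R^n}\right].
\end{equation*}
The whole argument then reduces to expanding $\phi(\tau/d_n)$ to the order dictated by the tail of $F$, substituting $R^n=nc_1+\sqrt{n}\,W_n$ with $W_n\Rightarrow\mathcal N(\tilde\mu,c_2+\tilde\sigma^2)$ from Theorem~\ref{numberclaims}, and passing to the limit inside the outer expectation; since characteristic functions are bounded by $1$, bounded convergence legitimizes the interchange once the random exponent is shown to converge in probability.

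For the finite-variance case \eqref{normalapprox1} I would take $d_n=\sqrt{nV}$, $a_n=nc_1E$, and use the second-order expansion $\log\phi(\tau/d_n)=i\tau E/d_n-\tau^2/(2n)+o(1/n)$, valid precisely because $V<\infty$. Multiplying by $R^n$, the deterministic linear part cancels the centering $-i\tau a_n/d_n$, the quadratic part produces the variance $c_1$ through $R^n/n\to c_1$, and the linear part acting on the fluctuation $\sqrt n\,W_n$ produces the random contribution $i\tau(E/\sqrt V)W_n$; the remainder $R^n\cdot o(1/n)=o_P(1)$ is negligible because $R^n=O_P(n)$. Taking the limit yields the characteristic function of $\mathcal N\big((E/\sqrt V)\tilde\mu,\,c_1+(E^2/V)(c_2+\tilde\sigma^2)\big)$. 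Conceptually this is an Anscombe-type randomly-indexed central limit theorem superimposed on the independent count fluctuations of $R^n$, the two variances adding because sizes and counts are independent.

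In the heavy-tailed cases I would replace the Taylor expansion by the characteristic-function description of the domain of attraction: with $d_n=b(n)$ one has $n\big(\phi(\tau/b(n))-1-i\tau\beta_n/b(n)\big)\to\Psi_\alpha(\tau)$, where $\Psi_\alpha$ is the stable exponent in \eqref{charstablemean} or \eqref{charstablenomean} and $\beta_n$ is the appropriate truncated mean ($\beta_n\equiv E$ when $1<\alpha<2$, $\beta_n=e(n)$ when $0<\alpha\le1$). Feeding this in, the factor $R^n/n$ multiplying $\Psi_\alpha$ tends to $c_1$, so the limiting exponent is $c_1\Psi_\alpha(\tau)$, exactly the exponent of $Z_\alpha(c_1)$; the residual count fluctuations are controlled by regular variation, since $b\in RV_{1/\alpha}$ forces $\sqrt n/b(n)\to0$ when $\alpha<2$. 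For $1<\alpha<2$ the centering constant $E$ is scale-free, so the count-fluctuation term $E(R^n-nc_1)/b(n)\to0$ and one obtains \eqref{stableapprox2} with limit $c_1^{1/\alpha}Z_\alpha(1)$. The same scheme drives \eqref{stableapprox1}, but there the truncated mean $e(n)$ grows with the scale, so the fixed-index centering is really $R^ne(R^n)$ and must be traded for the stated $nc_1^{1/\alpha}e(n)$ using $b(nc_1)\sim c_1^{1/\alpha}b(n)$ and Karamata's theorem $xe(x)\sim\frac{\alpha}{1-\alpha}b(x)$.

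The main obstacle is precisely this centering/self-similarity bookkeeping in \eqref{stableapprox1}, and above all the boundary case $\alpha=1$. Because the stable exponent in \eqref{charstablenomean} is truncated at the fixed level $1$, the scaling $\tau\mapsto c_1^{1/\alpha}\tau$ moves the truncation point, so the relation between $c_1^{1/\alpha}Z_\alpha(1)$ and $Z_\alpha(c_1)$ is not strict but acquires an additive drift; for $0<\alpha<1$ this drift is a finite constant that the choice of centering $nc_1^{1/\alpha}e(n)$ is designed to absorb, whereas for $\alpha=1$ the change of variables in the truncated integral leaves the factor $\int_1^{c_1}u^{-1}\,du=\log c_1$, the genuinely surviving correction $c_1\log c_1$ in \eqref{stableapprox1}. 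Rigorously tracking this term, establishing the random-index stable limit theorem that replaces the deterministic index by $R^n$ (via $R^n/n\to c_1$ and independence), and uniformly controlling the regularly varying errors over the random values of $R^n$ constitute the technical heart of the argument; the three limit computations themselves are routine once these tools are in place.
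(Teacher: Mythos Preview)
Your approach is correct but genuinely different from the paper's. The paper never conditions on $R^n$ or manipulates $\phi(\tau/d_n)^{R^n}$; instead it works at the \emph{process} level. For each of the three cases it invokes the relevant functional limit theorem for the partial-sum process (Donsker in part~(1), the stable FCLT in parts~(2) and~(3)), couples this with Theorem~\ref{numberclaims} via independence to obtain joint convergence in $D([0,\infty))\times\R$, and then applies a random-time-change result (Theorem~3 of Durrett--Resnick (1977)) to pass from the deterministic index $[nt]$ to the random index $R^n$. The final step is the additive decomposition
\[
\frac{\cost([0,T])-a_n}{d_n}=\frac{\cost([0,T])-R^n\beta_n}{d_n}+\frac{\beta_n(R^n-\text{const}\cdot n)}{d_n},
\]
with the first piece handled by the composed FCLT and the second by Theorem~\ref{numberclaims} and regular variation of $b$ and $e$. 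Your characteristic-function route reaches the same destination without any function-space machinery, which is arguably more elementary; what the paper's approach buys is that the $o(1/n)$ remainder you need in $\log\phi(\tau/d_n)$ is replaced by a single clean weak-convergence statement in $D$, so no separate uniformity argument is needed when the random factor $R^n$ is applied.

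For part~(3) the two routes also diverge in how they extract the $c_1\log c_1$ correction at $\alpha=1$. You obtain it from the shift in the truncation point of the L\'evy exponent under the scaling $\tau\mapsto c_1\tau$, i.e.\ from $\int_1^{c_1}u^{-1}\,du$. The paper instead shows that $e(\cdot)$ is $\Pi$-varying with auxiliary function $b(x)/x$ (via de Haan's theorem) and reads off
\[
\frac{e(R^n)-e(n)}{b(n)/n}\stackrel{P}{\longrightarrow}\log c_1
\]
directly from the definition of $\Pi$-variation and $R^n/n\to c_1$; for $0<\alpha<1$ it uses $xe(x)\in RV_{1/\alpha}$ and $ne(n)/b(n)\to\alpha/(1-\alpha)$, which is your Karamata statement. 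These are equivalent computations, but be aware that in your scheme the natural centering coming out of $\phi(\tau/b(n))^{R^n}$ is $R^n e(n)$, not $R^n e(R^n)$, so the bookkeeping that converts this to $nc_1^{1/\alpha}e(n)$ has to be done carefully; that is exactly where the paper spends its effort as well.
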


\section{Non-renewable pro-rata warranty policy}\label{prorata}

The non-renewable pro-rata warranty policy is commonly used for consumer durables such as automobile batteries and tires \cite[page 169]{blischke:murthy:1994}. Under this policy, the manufacturer pays a fraction of the cost of the item in case of failure within the warranty period $W$. The fraction depends on the lifetime of the item. So, if an item of cost $c_b$ fails after time $t$ from the date of purchase, the manufacturer pays the amount $q(t)$, where
\begin{equation}\label{rebatedefn}
q(t) = \left\{ \begin{array}{cc} c_b r(t) & \hbox{if $t \le W$,}\\
0 & \hbox{otherwise,} \end{array}\right.
\end{equation} 
where $r : [0, W] \rightarrow [0,1]$ is a non-negative decreasing function with $r(0) = 1.$ We call the function $r(\cdot)$ our rebate function. In many situations, the rebate function is taken to be a linear or quadratic function of the lifetime of the item; see \citet[page 172]{blischke:murthy:1994}.

In this section, since there is no repair or replacement, each item sold can have at most one warranty claim. So, the times of claims measure $M(\cdot)$ has the additional property that for any Borel measurable set $A \subset [-W, T]$, $M(A)$ can only assume two values, $0$ or $1$ and Assumption \ref{freerepl_claim2} of Section \ref{freereplass} is always satisfied.

\subsection{Approximation of the distribution of total cost of warranty claims} As before, let  $\cost([0, T])$ be the total expenditure  on warranty claims during the fixed period $[0,T]$ in the $n$-th model. Let $C^n_{j, 1}$ be the lifetime of the $j$-th item sold. Then,
\begin{align*}
\cost([0, T]) &= \sum_{ \{j : S_j^n \in [-W, T] \}}  c_br(C^n_{j, 1})\epsilon_{C_{j, 1}^n}([0, W])\epsilon_{ S_j^n + C_{j, 1}^n} ([0, T]).
\end{align*}
Recall the definitions of $\tilde m, \delta, f_1, f_2$ and $\chi$ given in \eqref{define_tilde_m}, \eqref{define_tilde_f}, \eqref{define_f1}, \eqref{define_f2} and \eqref{defineg} respectively. In the case of pro-rata warranty policy, the function $r(\cdot)$ in the definition of the random function $\delta(\cdot)$ given in \eqref{define_tilde_f} is the same as the rebate function $r(\cdot)$ defined in \eqref{rebatedefn}. Here the random function $\delta(x)$ is interpreted as the proportion of price spent on warranty claims for an item sold at time $x$. Note that the rebate function $r(\cdot)$ is known and the randomness of $\delta(\cdot)$ stems solely from the random measure $M(\cdot)$. We denote the mean and variance of the Gaussian random variable $\int_{-W}^T \chi (N^{\infty})(u)\tilde m(du)$ by $\tilde \mu$ and $\tilde\sigma^2$ respectively. The forms of $\tilde \mu$ and $\tilde \sigma^2$ are given in \eqref{dotmudotsigma}. We also define two constants $c_1$ and $c_2$ as
\begin{align}\label{candc2prorata}
c_1 = \int_{[-W, T]} f_1(x)\nu(dx),\hskip 1 cm c_2 = \int_{[-W, T]}f_2(x) \nu(dx),
\end{align}
where $\nu(\cdot), f_1(\cdot)$ and $f_2(\cdot)$ are given in \eqref{eqn:salesprocess}, \eqref{define_f1} and \eqref{define_f2} respectively.

\begin{thm}\label{proratathm}
\rm{In case of pro-rata warranty policy, under the Assumptions \ref{freerepl_sale}-\ref{freerepl_claim1.5} of Section \ref{freereplass}, 
\begin{equation*}
\frac{\cost([0, T]) - n c_bc_1}{c_b\sqrt{n}} \Rightarrow \mathcal{N} ( \tilde \mu, c_2 + \tilde \sigma^2)
\end{equation*}
where $c_b$ is the price of each warranted item, $c_1$ and $c_2$ are given in \eqref{candc2prorata} and  $\tilde \mu$ and $\tilde \sigma^2$ are given in \eqref{dotmudotsigma}.
}
\end{thm}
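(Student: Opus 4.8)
The plan is to recognize that $\cost([0,T])/c_b$ has exactly the same structure as the claim count $R^n$ of Theorem~\ref{numberclaims}, with the constant weight $r\equiv 1$ replaced by the rebate function $r(\cdot)$ and the mean measure $m$ replaced by its $r$-weighted version $\tilde m$. Indeed, using the single-claim property $M(A)\in\{0,1\}$ and combining the three cases of \eqref{define_tilde_f}, one checks that a claim of an item sold at $x$ with relative claim time $u$ is honored and falls in $[0,T]$ precisely when $u\in[0,W]$ and $x\in[-u,T-u]$; hence
\begin{equation*}
\cost([0,T]) = c_b\sum_{\{j:S_j^n\in[-W,T]\}}\delta_j(S_j^n),\qquad \delta_j(x)=\int_{[0,W]}r(u)\mathbf{1}_{[-u,T-u]}(x)\,M_j^n(du),
\end{equation*}
with the $M_j^n$ i.i.d. Since $\frac{\cost([0,T])-nc_bc_1}{c_b\sqrt n}=\sqrt n\bigl(\tfrac1n\sum_j\delta_j(S_j^n)-c_1\bigr)$, the assertion is formally identical to Theorem~\ref{numberclaims} and I would prove it by the same route. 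Note that Assumption~\ref{freerepl_claim2} is automatic here, because $\delta_j(x)\in[0,1]$ forces $E[M^2([0,W])]\le 1$, and the claim-size Assumptions~\ref{size_iid}--\ref{ind_amount_time} are unnecessary since the cost per claim is the deterministic quantity $c_b r(\cdot)$; this is why only Assumptions~\ref{freerepl_sale}--\ref{freerepl_claim1.5} are required.

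First I would condition on the sales process $N^n$ and split
\begin{equation*}
\sqrt n\Bigl(\tfrac1n\sum_j\delta_j(S_j^n)-c_1\Bigr)=A_n+B_n,
\end{equation*}
where $A_n=\frac1{\sqrt n}\sum_j\bigl(\delta_j(S_j^n)-f_1(S_j^n)\bigr)$ is the fluctuation of the claims given the sales, and $B_n=\int_{[-W,T]}f_1(x)\,d\bigl[\sqrt n(\tfrac{N^n}{n}-\nu)\bigr](x)$ is the fluctuation coming from the random sales process; here I use $E[\delta_j(x)]=f_1(x)$ and $\mathrm{Var}[\delta_j(x)]=f_2(x)$ from \eqref{define_tilde_f}--\eqref{define_f2}.

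For $B_n$ I would apply the sales FCLT \eqref{eqn:salesprocess} together with a continuous-mapping argument for the integration functional $x\mapsto\int_{[-W,T]}f_1\,dx$ on $D([-W,T])$; a Fubini computation using $f_1(x)=\int_{[0,W]}\mathbf{1}_{[-u,T-u]}(x)\,\tilde m(du)$ identifies the limit as $\int_{[-W,T]}f_1\,dN^\infty=\int_{[0,W]}\chi(N^\infty)(u)\,\tilde m(du)$, which by \eqref{dotmudotsigma} is $\mathcal N(\tilde\mu,\tilde\sigma^2)$. For $A_n$, the boundedness $\delta_j\in[0,1]$ makes the conditional Lindeberg condition trivial, and the conditional variance $\frac1n\sum_j f_2(S_j^n)=\frac1n\int f_2\,dN^n\to\int f_2\,d\nu=c_2$ in probability (using $N^n/n\Rightarrow\nu$), so that, given $N^n$, $A_n$ is asymptotically $\mathcal N(0,c_2)$. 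Finally I would merge the two pieces through conditional characteristic functions: conditionally on $N^n$,
\begin{equation*}
E\bigl[e^{i\tau(A_n+B_n)}\mid N^n\bigr]\approx e^{i\tau B_n}\,e^{-\tau^2 c_2/2},
\end{equation*}
and taking expectations and passing to the limit via $B_n\Rightarrow\mathcal N(\tilde\mu,\tilde\sigma^2)$ yields the characteristic function $\exp\bigl(i\tau\tilde\mu-\tfrac{\tau^2}{2}(c_2+\tilde\sigma^2)\bigr)$, i.e.\ $\mathcal N(\tilde\mu,c_2+\tilde\sigma^2)$.

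I expect the main obstacle to lie in treating $B_n$: because $f_1$ is only piecewise defined with breaks at $x=0$ and $x=T-W$, the map $x\mapsto\int f_1\,dx$ need not be continuous everywhere on $D([-W,T])$ in the Skorohod topology. The continuity of $\nu$ at $T-W$ and $0$ assumed in Assumption~\ref{freerepl_sale}, together with the a.s.\ path-continuity of $N^\infty$, is exactly what guarantees the functional is continuous on a set of limit-probability one, so the continuous mapping theorem applies; rigorously justifying this, and the asymptotic conditional independence used to combine $A_n$ and $B_n$, is where the real work lies. Since all of this is already carried out for $r\equiv 1$ in the proof of Theorem~\ref{numberclaims}, the only genuinely new point is that replacing $1$ by the bounded rebate function $r$ (equivalently $m$ by $\tilde m$) preserves every estimate, which it does because $0\le r\le 1$.
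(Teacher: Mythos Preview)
Your proposal is correct and follows essentially the same route as the paper: reduce to the situation of Theorem~\ref{numberclaims} by writing $(c_b)^{-1}\cost_j([0,T])\mid S_j^n \stackrel{d}{=}\delta(S_j^n)$, observe that Assumption~\ref{freerepl_claim2} is automatic since $M([0,W])\le 1$, and then repeat verbatim the three steps of that proof (the conditional characteristic-function expansion via Lemma~\ref{prod_approx_lemma}, the Fubini/continuous-mapping argument of Lemma~\ref{use_sales_conv}, and the joint-convergence step to combine the two pieces). The only cosmetic difference is that the paper invokes Lemmas~\ref{prod_approx_lemma} and~\ref{use_sales_conv} by name rather than re-describing the $A_n/B_n$ split, and it derives $E[M^2([0,W])]\le 1$ directly from $M([0,W])\le 1$ rather than via $\delta_j\in[0,1]$.
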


\section{Examples of sales processes}\label{salesproc}

In the earlier two sections, we have considered the free replacement warranty policy and the pro-rata warranty policy. In both cases, the assumptions on the distribution of the times of claims measure $M(\cdot)$ as given in Assumptions \ref{freerepl_claim1.5} and \ref{freerepl_claim2} of Section \ref{freereplass} are modest and a vast class of measures qualify. In comparison, the assumption on the sales process $N^n(\cdot)$ given in \eqref{eqn:salesprocess} is stricter. Here we list several sales processes satisfying \eqref{eqn:salesprocess}.

\begin{ex}\label{renewal}
\rm{ {\bf{Renewal Processes}}\\
Suppose, $N(\cdot)$ is a renewal process on $[0, \infty),$ where the common inter-arrival distribution has mean $\phi_1$ and variance $\phi_2$. For the $n$-th model, define the sales process $N^n(\cdot)$ as
$N^n(s) = N(n(s +W))$ for $s \in [-W, T]$ and define $B(\cdot)$ to be the Brownian motion on $[0, \infty)$. Then, from (9.4) of \citet[page 293]{resnickbook:2007} or Theorem 14.6 of \cite[page 154]{billingsley:1999}, we get
\begin{align}\label{renewaleq}
\sqrt{n} \left( \frac{1}{n}N(n(s + W)) - \frac{s +W }{\alpha} \right) \Rightarrow \frac{\sqrt{\phi_2}}{\phi_1^{3/2}}B(s + W)
\end{align}
on $D([-W, T])$. Define, $\nu(s) = \frac{s +W}{\alpha}$ and $N^{\infty}(s) = \frac{\sqrt{\phi_2}}{\phi_1^{3/2}}B(s +W)$. The homogeneous Poisson process is a special case. 
}
\end{ex}

\begin{ex}\label{poisson}

\rm{ {\bf{Non-homogeneous Poisson Processes}}\\
Suppose, $\nu : [-W, T] \rightarrow [0, \infty)$ is a continuous strictly increasing function and $N(\cdot)$ a homogeneous Poisson process on $[0, \infty)$ with intensity 1. Now, define the sales process $N^n(\cdot)$ as $N^n(\cdot) = N(n\nu(\cdot))$ and define $B(\cdot)$ to be a Brownian motion on $[0, \infty).$ Applying (9.4) of \citet[page 293]{resnickbook:2007} in the case of $N(\cdot)$, we get
 \begin{align}\label{poissoneq}
\sqrt{n} \left( \frac{1}{n}N(ns) - s \right) \Rightarrow B(s)
\end{align}
on $D([0, \infty))$. Define the composition function $\psi : D([0, \infty)) \rightarrow D([-W,T])$ by $\psi(x) = x \circ \nu$, and since $\psi(\cdot)$ is continuous \citep[Theorem 3.1]{wardwhitt}, using the continuous mapping theorem \cite[page 21]{billingsley:1999} to \eqref{poissoneq}, we get
 \begin{align*}
\psi \left( \sqrt{n} \left( \frac{1}{n}N(ns) - s \right)\right) \Rightarrow \psi \left(B(s)\right)
\intertext{on $D([-W, T]),$ which implies}
\sqrt{n} \left( \frac{1}{n}N(n\nu(s)) - \nu(s) \right) \Rightarrow B(\nu(s))
\end{align*}
on $D([-W, T])$. Define, $N^{\infty}(\cdot) =  B(\nu(\cdot))$ and Assumption \ref{freerepl_sale} of Section \ref{freereplass} holds.
}
\end{ex}

\begin{ex}
\rm{ {\bf{Doubly Stochastic Poisson Processes}}\\
Define $\nu(\cdot)$, $B(\cdot)$ and $N(\cdot)$ as in Example \ref{poisson} and let $D_0$ be the subset of non-negative non-decreasing functions of $D([-W, T])$. Assume, there exists a sequence of random elements $\{ \Lambda^n \}$ of $D_0$ independent of $N(\cdot)$ and after centering and scaling the sequence converges to a continuous Gaussian process $N_2^{\infty}(\cdot)$ in $D([-W, T])$; i.e. 
\begin{align}\label{doubstoceq1}
\frac{ \Lambda^n( \cdot) - n \nu(\cdot) }{\sqrt{n}} \Rightarrow N_2^{\infty} (\cdot)
\end{align}
on $D([-W, T]).$ Now, define the sales process $N^n(\cdot) = N( \Lambda^n(\cdot))$. Using the fact that $N(\cdot)$ is independent of $\{ \Lambda^n \}$, \eqref{poissoneq} and \eqref{doubstoceq1} yield \citep[page 25]{billingsley:1999}
\begin{align*}
\left( \begin{array}{c} \sqrt{n} \left( \frac{1}{n}N(n\cdot) - (\cdot) \right) \\ \frac{1}{\sqrt{n}}(\Lambda^n( \cdot) - n \nu(\cdot) ) \end{array} \right) \Rightarrow \left( \begin{array}{c} B(\cdot) \\ N_2^{\infty} (\cdot)
\end{array} \right)
\intertext{on $D([-W, T]) \times D([-W, T])$, where $N_2^{\infty} (\cdot)$ and $B(\cdot)$ are independent of each other. Further, using \citet[page 37]{billingsley:1999} and $\Lambda^n(\cdot)/n \Rightarrow \nu(\cdot)$ gives}
\left( \begin{array}{c} \sqrt{n} \left( \frac{1}{n}N(n\cdot) - (\cdot) \right) \\ \frac{1}{n} \Lambda^n( \cdot) \\ \frac{1}{\sqrt{n}}(\Lambda^n( \cdot) - n \nu(\cdot) ) \end{array} \right) \Rightarrow \left( \begin{array}{c} B(\cdot) \\ \nu(\cdot) \\ N_2^{\infty} (\cdot)
\end{array} \right)
\intertext{on $D([-W, T]) \times D_0 \times D([-W, T])$ and from the continuous mapping theorem \cite[page 21]{billingsley:1999} and Theorem 3.1 of  \cite{wardwhitt}, we get}
\left( \begin{array}{c} \frac{1}{\sqrt{n}}\left( N(\Lambda^n(\cdot)) - \Lambda^n(\cdot) \right) \\ \frac{1}{\sqrt{n}}(\Lambda^n( \cdot) - n \nu(\cdot) ) \end{array} \right) \Rightarrow \left( \begin{array}{c} B(\nu(\cdot)) \\ N_2^{\infty} (\cdot)
\end{array} \right)
\intertext{ on $D([-W, T]) \times D([-W, T])$. Therefore, applying the addition functional,}
\sqrt{n} \left( \frac{1}{n}N(\Lambda^n(\cdot)) - \nu(\cdot) \right) \Rightarrow B(\nu(\cdot)) + N_2^{\infty}(\cdot)
\end{align*}
on $D([-W, T])$ and the processes $B(\cdot)$ and $N_2^{\infty} (\cdot)$ are independent of each other. With $N^{\infty}(\cdot) =  B(\nu(\cdot)) + N_2^{\infty}(\cdot)$, this model satisfies Assumption \ref{freerepl_sale} of Section \ref{freereplass}.

Assumption \eqref{doubstoceq1} is modest and Examples \ref{renewal} or \ref{poisson} satisfy \eqref{doubstoceq1}.
}
\end{ex}

\section{Estimation procedure}\label{dataandestimation}

An important estimation question is the choice of $n$. We interpret $n$ as a measure of the volume of sales of the warranted item. So, $n$ should depend on the size of the company and the nature of the warranted item. For example, we would expect larger $n$ for an ordinary car model than a luxury car model. We assume that for the time period we are considering, say $[-W, T]$, $n$ does not change. The non-stationarity of the sales process $N^n(\cdot)$ in this period is captured by the functions $\nu(\cdot), \theta(\cdot)$ and $\gamma(\cdot, \cdot)$ given in Assumption \ref{freerepl_sale} of Section \ref{freereplass}. If we are ambitious enough to predict the warranty cost on some time period further in future, say $[T, 2T]$, we will assume that $n$ does not change for the entire time period $[-W, 2T]$. Thus, we assume $n$ does not change for the entire time period we consider. Since we interpret $n$ as a measure of the sales volume, and $n$ does not change for the entire time period, we choose total sales in our observed sales data, say total sales in the time period $[-W, 0]$, for $n$.

We discuss estimation methods for both the non-renewing free replacement warranty policy and the non-renewing pro-rata warranty policy. 

\subsection{Free replacement policy} Which version of Theorem \ref{mainthm} should we apply: \eqref{normalthm}, \eqref{stablemeanexists} or \eqref{stablethm}? The answer depends on the data of claim sizes. We assumed claim sizes are i.i.d. with common distribution function $F$. A diagnostic for determining whether data comes from a heavy-tailed distribution is the QQ plot \cite[page 97]{resnickbook:2007}. If $\bar F = 1 - F \in RV_{-\alpha}$ for some $\alpha > 0$, we expect the QQ plot to be a straight line with slope $\frac{1}{\alpha}$. If we decide $\bar F \in RV_{-\alpha}$, we estimate $\alpha$ using one of the various estimators of $\alpha$ available in the literature \citep[Chapter 4]{resnickbook:2007}. Depending on the value of our estimate of $\alpha$, we determine which version of Theorem \ref{mainthm} to use. If our analysis yields that $\bar F \notin RV_{-\alpha}$, we verify that $F$ has finite variance and use version \eqref{normalthm} of Theorem \ref{mainthm}.

For versions \eqref{normalthm}, \eqref{stablemeanexists} and \eqref{stablethm} of Theorem \ref{mainthm}, the limit relations in \eqref{normalapprox1}, \eqref{stableapprox2} and \eqref{stableapprox1} have different sets of parameters. We proceed case by case to discuss how we estimate parameters in each case.
 
\subsubsection{Estimation of the parameters in the limit relation in Theorem \ref{mainthm}, version \eqref{normalthm}} \label{sec:estimationnormalthm}
We estimate six parameters given in \eqref{normalapprox1}: $c_1, c_2, \tilde \mu, \tilde \sigma^2, E$ and $V$. We estimate $E$ by the sample mean and $V$ by the sample variance of the claim sizes. 

For the rest of the parameters, we first analyze the sales data and estimate the functions $\nu(\cdot), \theta(\cdot)$ and $\gamma(\cdot, \cdot),$ given in  Assumption \ref{freerepl_sale} of Section \ref{freereplass}. We assume that we have observed sales for the period $[-W, 0]$ and have not observed sales for the period $[0, T]$.

One parametric approach for estimating $\nu(\cdot)$, which is adopted in Section \ref{cardata}, assumes that $n \nu(\cdot)$ follows the Bass model \citep{bass:1969} in the time period where we have observed sales, say $[-W, 0]$. Since the Bass model describes the pattern of sales from the introduction of an item in the market \citep{bass:1969}, this approach gets additional justification when we have sales data of the warranted item starting from its introduction in the market. The Bass model for total sales by time $t$, $T(t)$ (adjusted for our clock, since we have sales data for the period $[-W, 0]$) is given by
\begin{align*}
T(t) = n \frac{1 - \exp(-C(t+W))}{1 + (C/B - 1)\exp (-C(t+W))},
\end{align*} 
 where $n$ is the total sales in the time period of observed sales, say $[-W, 0]$. Hence, using the Bass model for $n \nu(\cdot),$ we get that $\nu(\cdot)$ must have the form $T(t)/n$ and to estimate $\nu(\cdot)$, we have to estimate the parameters $B$ and $C$. Let $\nu'(t)$ be the density of $\nu(\cdot)$ at $t$. We minimize the squared error
\begin{equation*}
\min_{B, C} \sum_{t = -W +1}^0 {\left[N^n(t) - N^n(t-1) - n\nu'(t) \right]}^2
\end{equation*}
to obtain estimates $(\hat B, \hat C)$. Using this procedure, we fit the Bass model to our observed data on sales (say, on the time period $[-W, 0]$) and then extrapolate $\nu(\cdot)$ on some future time period, say $[0, T]$, on which we have no sales data. We denote estimated $\nu(\cdot)$ as $\hat \nu (\cdot)$. Our estimation of $\nu(\cdot)$ is free from any distributional assumption on the sales process $N^n(\cdot)$.
 
 Now, we obtain the residuals $\{ r_t = n^{-1/2}\left( N^n(t) - N^n(t-1) - \hat \nu(t) + \hat \nu (t-1) \right): t = -W+1, -W+2, \cdots, 0 \}$. These residuals act as surrogates for $\{ N^{\infty}(t) - N^{\infty}(t-1): t = -W+1, -W+2, \cdots, 0\}$ (recall the limit relation in \eqref{eqn:salesprocess}). We use standard time-series techniques on $\{ r_t: t = -W+1, \cdots, 0 \}$ to get estimates of $\{ \hat \theta(t), \hat \gamma(t, s): t, s = -W+1, \cdots, 0\}.$
 
 We assume  that $N^{\infty}(t) - N^{\infty}(t-1) = \tr_t + \si_t Z(t)$, where $Z(t)$ is a stationary Gaussian process and $\si_t$ is function of $t$ which takes only positive values. Note that this is an additional assumption we need for estimation purposes. We have not assumed $E[Z(t)] = 0$ or $Var[Z(t)] = 1$.
 
 We first plot the time plot of $\{ r_t: t = -W+1, \cdots, 0 \}$. If the time plot looks stationary, we are done and assume $\tr_t \equiv 0$ and $\si_t \equiv 1$. Otherwise, we estimate $\tr_t$ and $\si_t$. We do moving average smoothing on $r_t$ to get $\hat \tr_t$, which estimates the trend. We plot absolute values of $(r_t - \hat \tr_t )$ and fit another moving average estimator to it to get $\hat \si_t$. 
 
 We assume $\{ j_t = (r_t - \hat \tr_t)/\hat \si_t: t = -W+1, \cdots, 0 \}$ act as surrogates for the stationary process $\{ Z(t): t = -W+1, \cdots, 0 \}$. We estimate the sample mean $l$, sample variance $s^2$ and sample autocorrelation function $c(\cdot)$ of $\{j_t \}$. Hence, $\{ \theta(t) - \theta(t-1): t = -W+1, \cdots, 0 \}$ is estimated as
 \begin{align*}
 \hat \theta(t) - \hat \theta (t-1) = \hat \tr_t + l \hat \si_t,
 \end{align*}
 and recover $\{ \hat \theta(t): t = -W+1, \cdots, 0 \}$. Similarly, $\{Cov[N^{\infty}(t)- N^{\infty}(t-1), N^{\infty}(s)- N^{\infty}(s-1)]: t = -W+1, \cdots, 0 \}$ is estimated as
  \begin{align} \label{covstructure}
\hat {Cov}[ N^{\infty}(t)- N^{\infty}(t-1), N^{\infty}(s)- N^{\infty}(s-1)] = \hat \si_t \hat \si_s s^2 c(t-s). 
 \end{align}
 From \eqref{covstructure}, $\{ \hat \gamma(\cdot, \cdot): t, s = -W+1, \cdots, 0 \}$ can be computed.
 
 We also require $\{ \hat \theta(t), \hat \gamma(t, s): t \in [0, T], s \in [-W, T] \}$. The problem in estimating $\{ \hat \theta(t), \hat \gamma(t, s): t \in [0, T], s \in [-W, T] \}$ is that we do not yet have estimates of $\{ \hat \tr_t, \hat \si_t, c(s) : t \in [0, T], \hskip 0.1 cm s \in [W, T+W] \}$. To get estimates of $\{ \hat \tr_t, \hat \si_t, c(s) : t \in [0, T], \hskip 0.1 cm s \in [W, T+W] \}$, fit a polynomial to both $\{ \hat \tr_t: t = -W+1, \cdots, 0 \}$ and $\{ \log (\hat \si_t): t = -W+1, \cdots, 0 \}$. We use the fitted polynomial values to estimate $\{ \hat \tr_t, \hat \si_t: t \in [0, T] \}$. We also assume $c(t) = 0$ if $t > W$, since we  only have data on sales from $[-W, 0]$. If we have sales data for a longer period, then it is also possible to estimate $c(t)$ for $t > W.$ Then, using estimates of $\{ \hat \tr_t, \hat \si_t, c(s) : t \in [0, T], \hskip 0.1 cm s \in [W, T+W] \}$ we obtain estimates of $\{ \hat \theta(t), \hat \gamma(t, s): t \in [0, T], \hskip 0.1 cm s \in [-W, T] \}$ following a similar procedure as the one used to obtain  $\{ \hat \theta(t), \hat \gamma(t, s): t, s = -W+1, \cdots, 0\}$. Thus, we complete our estimation of $\{\hat \theta(t), \hat \gamma(t, s) : t, s = -W, \cdots, T\}$.
 
Now analyze the warranty claims data to get an estimate for the distribution of the times of claims measure $M(\cdot)$, given in Assumption \ref{freerepl_claim1} of Section \ref{freereplass}. Recall that the times of claims measure in the $n$-th model for the $j$-th item sold is $M^n_j(\cdot)$. Also, by Assumption \ref{freerepl_claim1} of Section \ref{freereplass}, $\{ M^n_j(\cdot), j = 1, 2, \cdots, n \}$ are independent and identically distributed with common distribution as that of $M(\cdot)$. For each item $j$ in our sales data, we consider its times of claims  measure $M^n_j(\cdot).$ If an item $j$ has no record of claims, then we assume that $M^n_j \equiv 0$. We compute $\{ M^n_j((x-1, x]): x = 0, 1, \cdots, W \}$ with the interpretation that for $x = 0$, $M^n_j((x-1, x]) \equiv M_j^n(\{0\})$. From the plot of  $\left\{ \left( x, \frac{1}{n} \sum_{j = 1}^n M_j^n((x-1, x]) \right): x = 0, 1, \cdots, W \right\}$, we infer a functional form of the mean measure $m(\cdot) = E[M(\cdot)] = E[M^n_1(\cdot)]$. Getting a functional form of $m(\cdot)$ is useful because to compute $\tilde \mu$ and $\tilde \sigma^2$, given in \eqref{tildemutildesigma}, we have to integrate with respect to $m(dx)$; see Section \ref{sec:examplearrivalclaim} for an example. We denote estimated $m(\cdot)$ as $\hat m(\cdot)$.

Recall the definition of the parameters $c_1$ and $c_2$ given in \eqref{candc2freereplacement}. To estimate $c_1$ and $c_2$, we need to estimate first the functions $\{ \hat f_1(x) : x \in [-W, T] \}$ and $\{ \hat f_2(x) : x \in [-W, T] \}$. Actually, we estimate $\{ \hat f_1(x) : x = -W, -W +1, \cdots, T \}$ and $\{ \hat f_2(x) : x = -W, -W +1, \cdots, T \}$, and get estimates $ \hat c_1 = \int_{-W}^T \hat  f_1(x) \hat \nu(dx)$ and $\hat c_2 = \int_{-W}^T \hat f_2(x) \hat \nu(dx)$ using the trapezoid method of integration. We estimate $\{ \hat f_1(x) : x = -W, \cdots, T \}$ and $\{ \hat f_2(x) : x = -W, \cdots, T \}$ as  
\begin{equation}\label{f1estimation}
\hat f_1(x) = \left \{\begin{array}{ll} \hat m ([0, T-x]), & \hbox{if $ 0 \le x \le T$},\\
 \hat m([-x, T-x]), & \hbox{if $ T - W < x < 0$},\\
 \hat m([-x, W]), & \hbox{if $ -W \le x \le T -W,$} \end{array} \right.
 \end{equation}
 and
 \begin{equation}\label{f2estimation}
 \hat f_2(x) = \left \{\begin{array}{ll} \frac{1}{n} \sum_{j=1}^n {\left[M^n_j([0, T-x])\right]}^2 - {\left[ \hat f_1(x)\right]}^2 & \hbox{if $ 0 \le x \le T$},\\
  \frac{1}{n} \sum_{j=1}^n {\left[M^n_j([-x, T-x])\right]}^2 - {\left[ \hat f_1(x)\right]}^2, & \hbox{if $ T - W < x < 0$},\\
  \frac{1}{n} \sum_{j=1}^n {\left[M^n_j([-x, W])\right]}^2 - {\left[ \hat f_1(x)\right]}^2, & \hbox{if $ -W \le x \le T -W,$} \end{array} \right.\\
\end{equation}
where $n$ is the total number of items sold and $M^n_j(\cdot)$ is the times of claims measure for the $j$-th item sold in the $n$-th model.

Now, we are left with the estimation of $\tilde \mu$ and $\tilde \sigma^2$, given in \eqref{tildemutildesigma}. To estimate $\tilde \mu$ and $\tilde \sigma^2$, first we must estimate $\{ E[ \chi(N^{\infty})(u)]: u \in [0,W] \}$
and $\{ Cov[ \chi(N^{\infty})(u), \chi(N^{\infty})(v) ]: u, v \in [0,W] \}$, where $N^{\infty}(\cdot)$ is given in \eqref{eqn:salesprocess} and $\chi(\cdot)$ is defined in \eqref{defineg}. We estimate $\{ E[ \chi(N^{\infty})(u)], u = 0, 1, \cdots, W \}$ and $\{Cov[ \chi(N^{\infty})(u), \chi(N^{\infty})(v) ]: u, v = 0, 1, \cdots, W \}$ from the estimates of $\{ \hat \theta(t), \hat \gamma(t, s) : t, s = -W+1, \cdots, T \}$ as
\begin{align*}
{\hat E[ \chi(N^{\infty})(u)]} &= \hat \theta(T-u) - \hat \theta(u),\\
\intertext{and}
\hat {Cov}[ \chi(N^{\infty}(u)), \chi(N^{\infty}(v)) ] &= \hat \gamma (T-u, T-v) + \hat \gamma(-u, -v) -  \hat \gamma(T-u, -v) - \hat \gamma(T-v, -u),
\end{align*}
where $\{ \hat \theta(t), \hat \gamma(t, s) : t, s = -W+1, \cdots, T \}$ are estimates of $\{ \theta(t), \gamma(t, s) : t, s = -W+1, \cdots, T \}$ obtained while analyzing the sales process. The definitions of the functions $\theta(\cdot)$ and $\gamma(\cdot, \cdot)$ can be found in Assumption \ref{freerepl_sale} of Section \ref{freereplass}. Now, we integrate by the trapezoid method to obtain the estimated mean 
$$\hat {\tilde \mu} = \int_{[0, W]} \hat E[ \chi \left(N^{\infty} \right)(u)] \hat m(du)$$ 
and the estimated variance 
$$\hat {\tilde \sigma}^2 = \int_{[0, W]} \int_{[0, W]} \hat {Cov} \left[ \chi \left(N^{\infty} \right)(u), \chi \left(N^{\infty} \right)(v) \right] \hat m(du) \hat m(dv).$$ 

This method of estimation is applied to the sales and claims data of a car manufacturer for a specific model and model year in Section \ref{cardata}.

\subsubsection{Estimation of the parameters in the limit relation in Theorem \ref{mainthm}, version \eqref{stablemeanexists}} \label{sec:estimationstablemeanexists}

We estimate the parameters $c_1, E, \alpha, b(n)$ and the parameters of the stable distribution of $Z_{\alpha}(1)$, where $Z_{\alpha}(1)$ is given in \eqref{stableapprox2}. Estimate $c_1$ and $E$ in the same manner as described in Section \ref{sec:estimationnormalthm}. We estimate $\alpha$ by one of its estimators \cite[Chapter 4]{resnickbook:2007}, say the QQ-estimator. There are two ways to estimate $b(n)$: 
\begin{enumerate}
\item Use the $(1 - \frac{1}{n})$-th quantile of the iid data on claim sizes as $b(n)$; or 
\item Assume the claim size distribution $F$ is close to pareto and use $n^{1/\alpha}$ as an estimate of $b(n)$.
\end{enumerate}
 We adopt the second method of estimating $b(n)$ when analyzing data in Section \ref{cardata}.

For the stable distribution of $Z_{\alpha}(1)$, we follow the parameterization of  \citet[page 5]{samorodnitsky:taqqu:1994}. From \eqref{charstablemean}, we get that the parameters of the distribution of $Z_{\alpha}(1)$ are
\cite[page 171]{samorodnitsky:taqqu:1994}: 
\begin{align}\label{stablemeanpara}
\mu = 0, \hskip 0.5 cm \sigma = {\left(- \frac{\Gamma(2-\alpha)}{\alpha -1} \cos(\frac{\pi \alpha}{2}) \right)}^{\frac{1}{\alpha}}, \hskip 0.5 cm \beta = 1.
\end{align}
Obtaining an estimate of $\sigma$ from our estimate of $\alpha$ is a simple numerical procedure.

\subsubsection{Estimation of the parameters in the limit relation in Theorem \ref{mainthm}, version \eqref{stablethm} } Estimate $\alpha$, say using the QQ estimator. Depending on whether $0 < \alpha < 1$ or $\alpha = 1$, our estimators of parameters will be different, but in both cases, we have to estimate the same set of parameters: $c_1, e(n), b(n)$ and the parameters of the stable distribution of $Z_{\alpha}(c_1)$, where $Z_{\alpha}(c_1)$ is given in \eqref{stableapprox1}. Estimate $c_1$ using the same procedure discussed in Section \ref{sec:estimationnormalthm}.

When $0 < \alpha < 1$, we assume that the claim size distribution $F$ is quite close to Pareto and hence use $n^{1/\alpha}$ as an estimate of $b(n)$ and $\frac{\alpha}{1 - \alpha} \left( n^{(1 - \alpha)/\alpha} -1 \right)$ as an estimate of $e(n)$. For the stable distribution of $Z_{\alpha}(c_1)$, we follow the parameterization of  \citet[page 5]{samorodnitsky:taqqu:1994}. From \eqref{charstablenomean}, we get that the parameters of the distribution of $Z_{\alpha}(c_1)$ are \cite[page 170]{samorodnitsky:taqqu:1994}:

\begin{align*}
\mu = -\frac{c_1 \alpha}{1 - \alpha}, \hskip 0.5 cm \sigma = {\left(c_1\Gamma(1-\alpha) \cos(\frac{\pi \alpha}{2}) \right)}^{\frac{1}{\alpha}}, \hskip 0.5 cm \beta = 1.
\end{align*}
Computing estimates of $\mu$ and $\sigma$ using our estimates of $\alpha$ and $c_1$ is routine.

If $\alpha = 1$, we assume again that the claim size distribution $F$ is quite close to Pareto and hence use $n$ as an estimate of $b(n)$ and $\log n$ as an estimate of $e(n)$. For the stable distribution of $Z_{\alpha}(c_1)$, we follow the parameterization of  \citet[page 5]{samorodnitsky:taqqu:1994}. From \eqref{charstablenomean}, we get that the parameters of the distribution of $Z_{\alpha}(c_1)$ are \cite[page 166]{samorodnitsky:taqqu:1994}:

$$\mu = c_1 \int_0^{\infty} [ \sin z - z 1_{ \{z \le 1\} } ] z^{-2} dz , \hskip 0.5 cm \sigma = \frac{c_1\pi}{2}, \hskip 0.5 cm \beta = 1.$$
Computing estimates of $\mu$ and $\sigma$ using our estimates of $\alpha$ and $c_1$ is then routine.

\subsection{Pro-rata policy case}
 The estimation method in this case is mostly similar to the one described in Section \ref{sec:estimationnormalthm}. We need to estimate four parameters given in Theorem \ref{proratathm}: $c_1, c_2, \tilde \mu$ and $\tilde \sigma^2.$

First, observe that in this case, we do not need any data on claim sizes. Given the times of claims measures $\{M^n_j(\cdot): j = 1, 2, \cdots .  \},$ the claim sizes are determined by the function $q(\cdot)$ given in \eqref{rebatedefn}. 

We analyze the sales process in the same manner as described in Section \ref{sec:estimationnormalthm}. Thus, we obtain estimates of the mean and covariance functions of $\{\chi(N^{\infty})(u): u = 0, 1, \cdots, W \}$ given by $\{ E[ \chi (N^{\infty})(u)]: u = 0, \cdots, W\}$ and $\{ Cov[ \chi(N^{\infty}(u)), \chi(N^{\infty}(v)) ]: u, v = 0, \cdots, W \}$. We also estimate the mean times of claims measure $m(\cdot) = E[ M(\cdot)]$ following the same methods described in Section \ref{sec:estimationnormalthm}. We denote the estimate of $m(\cdot)$ as $\hat m(\cdot)$.

Now, recall the parameters $\tilde \mu$ and $\tilde \sigma^2$ given in \eqref{dotmudotsigma} and the rebate function $r(\cdot)$ defined in \eqref{rebatedefn}. Following \eqref{dotmudotsigma}, we estimate $\tilde \mu$ and $\tilde \sigma^2$ as 
\begin{align*}
\hat {\tilde \mu} = \int_{[0, W]} \hat E[ \chi\left(N^{\infty} \right)(u)] \hat {\tilde m}(du) = \int_{[0, W]} \hat E[ \chi \left(N^{\infty} \right)(u)]r(u) \hat m(du)
\end{align*}
 and 
 \begin{align*}
 \hat {\tilde \sigma}^2 &= \int_{[0, W]} \int_{[0, W]} \hat {Cov} \left[ \chi \left(N^{\infty} \right)(u), \chi \left(N^{\infty} \right)(v) \right] \hat {\tilde m}(du) \hat {\tilde m}(dv) \\
 &= \int_{[0, W]} \int_{[0, W]} \hat {Cov} \left[ \chi \left(N^{\infty} \right)(u), \chi \left(N^{\infty} \right)(v) \right]r(u)r(v) \hat m(du) \hat m(dv).
 \end{align*}

 Now, recall the parameters $c_1$ and $c_2$ given in \eqref{candc2prorata}. To compute $c_1$ and $c_2$, we first have to estimate $\{ f_1(x) : x = -W, \cdots, T \}$ and $\{ f_2(x) : x = -W, \cdots, T \}$,  where the functions $f_1(\cdot)$ and $f_2(\cdot)$ are defined in \eqref{define_f1} and \eqref{define_f2}. We estimate $\{ \hat f_1(x) : x = -W, \cdots, T \}$ and $\{ \hat f_2(x) : x = -W, \cdots, T \}$ as 
\begin{equation*}
\hat f_1(x) = \left \{\begin{array}{ll}  \int_{[0, T-x] } r(y) \hat m(dy), & \hbox{if $ 0 \le x \le T$},\\
 \int_{[-x, T-x] }  r(y) \hat m(dy), & \hbox{if $ T - W < x < 0$},\\
 \int_{[-x, W] }  r(y) \hat m(dy), & \hbox{if $ -W \le x \le T -W,$} \end{array} \right.\\
  \end{equation*}
  and 
\begin{equation*}
 \hat f_2(x) = \left \{\begin{array}{ll} \frac{1}{n} \sum_{j=1}^n {\left[\int_{[0, T-x]} r(y)M^n_j(dy)\right]}^2 - {\left[ \hat f_1(x)\right]}^2 & \hbox{if $ 0 \le x \le T$},\\
  \frac{1}{n} \sum_{j=1}^n {\left[ \int_{[-x, T-x]} r(y)M^n_j(dy)\right]}^2 - {\left[ \hat f_1(x)\right]}^2, & \hbox{if $ T - W < x < 0$},\\
  \frac{1}{n} \sum_{j=1}^n {\left[\int_{[-x, W]} r(y)M^n_j(dy)\right]}^2 - {\left[ \hat f_1(x)\right]}^2, & \hbox{if $ -W \le x \le T -W,$} \end{array} \right.\\
\end{equation*}
where $r(\cdot)$ is the rebate function given in \eqref{rebatedefn}, $n$ is the total number of items sold and $M^n_j(\cdot)$ is the times of claims measure associated with the $j$-th item sold. Note that if $r \equiv 1$, the estimate of $\{ \hat f_1(x) : x = -W, \cdots, T \}$ and $\{ \hat f_2(x) : x = -W, \cdots, T \}$ is the same as in \eqref{f1estimation} and \eqref{f2estimation}. Now, we integrate by the trapezoid method to estimate $\hat c_1 = \int_{-W}^T \hat f_1(x) \hat \nu(dx)$ and $\hat c_2 = \int_{-W}^T \hat f_2(x) \hat \nu(dx)$, where $\hat \nu(\cdot)$ is an estimate of $\nu(\cdot)$ obtained from the analysis of the sales process. The definition of $\nu(\cdot)$ is given in \eqref{eqn:salesprocess}.

\section{Example}\label{cardata}

We applied our methods to automobile sales and warranty claims data from a large car manufacturer for a single car model and model year. The company warranted each car sold for three years; i.e. $W = 1096$ days. The period for which we are estimating the cost is taken to be a quarter; i.e. $T = 91$ days. This data is the same as the one used by \citet{kulkarni:resnick:2007}, but we do not assume the sales process or the times of claims measure $M(\cdot)$ is Poisson.

Which version of Theorem \ref{mainthm} should we use? To answer this, we analyze the data on claim sizes.

\subsection{Analysis of the claim size distribution}\label{sec:examplesize}

The data consists of the vehicle id which identifies the car, the date on which a car comes with some claim,  the claim id which is unique for each (car, claim) pair and the amount of such a claim. 

From the data, a car on a particular day could come with multiple claims. However, from our definition in \eqref{defn_mjn} of the times of claims measure $M_j^n(\cdot)$ associated with the $j$-th item sold, $M_j^n(\cdot)$ is a point measure consisting of random points $\{C^n_{j,i}, i =1, 2, \cdots\}$. So, to be consistent with our modeling, for each pair (vehicle id, date), we add the costs of all the claims associated with it, i.e. if a car with vehicle id $V$ comes with $p$ claims on a particular date $D$, which cost $X_1, X_2, \cdots, X_p$ respectively, then we assume that the car $V$ has arrived on date $D$ with a single claim of size $X_1 + X_2 + \cdots +X_p$. Thus, we associate the claim size $(X_1 + X_2 + \cdots +X_p)$ to the (vehicle id, date) pair $(V, D)$. Our processing of the data on claim sizes differs from that of \cite{kulkarni:resnick:2007}.

We tabulate the estimated mean, variance and quartiles of the claim size distribution in Table \ref{table:claimsizesummaryfulldata}. Since the estimated third quartile is smaller than the mean, we expect power-like tails of the distribution of claim sizes. The density plot of the claim size distribution and the QQ plot \citep[page 97]{resnickbook:2007} are shown in Figure \ref{densityandQQplot}. We use the QQ estimator \citep[page 97]{resnickbook:2007} to obtain an estimate of $\hat \alpha = 1.52.$ This suggests using version 
\eqref{stablemeanexists} of Theorem \ref{mainthm}.

\begin{table}
\caption{Summary statistics for claim size data.}
\begin{center}
\begin{tabular}{|c|c|c|c|c|c|}
\hline Mean & Variance & First quartile &
Median &  Third quartile \\
\hline 47.53 & 18273.14 & 7.50 & 15.91 & 
42.79\\
\hline
\end{tabular}
\end{center}
\label{table:claimsizesummaryfulldata}
\end{table}

\begin{figure}
\begin{tabular}{cc}
\includegraphics[scale=0.4]{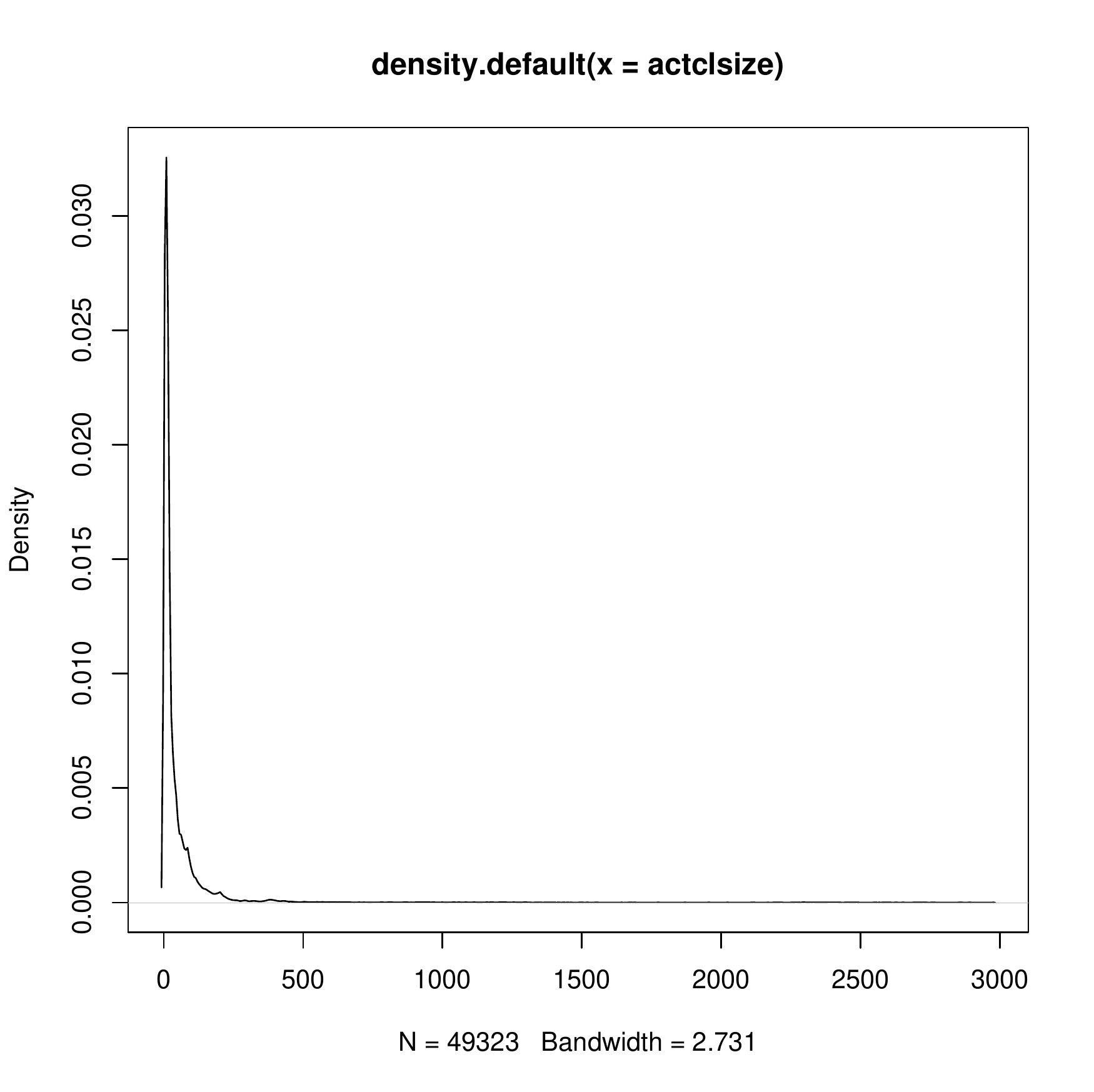}& \includegraphics[scale= 0.4]{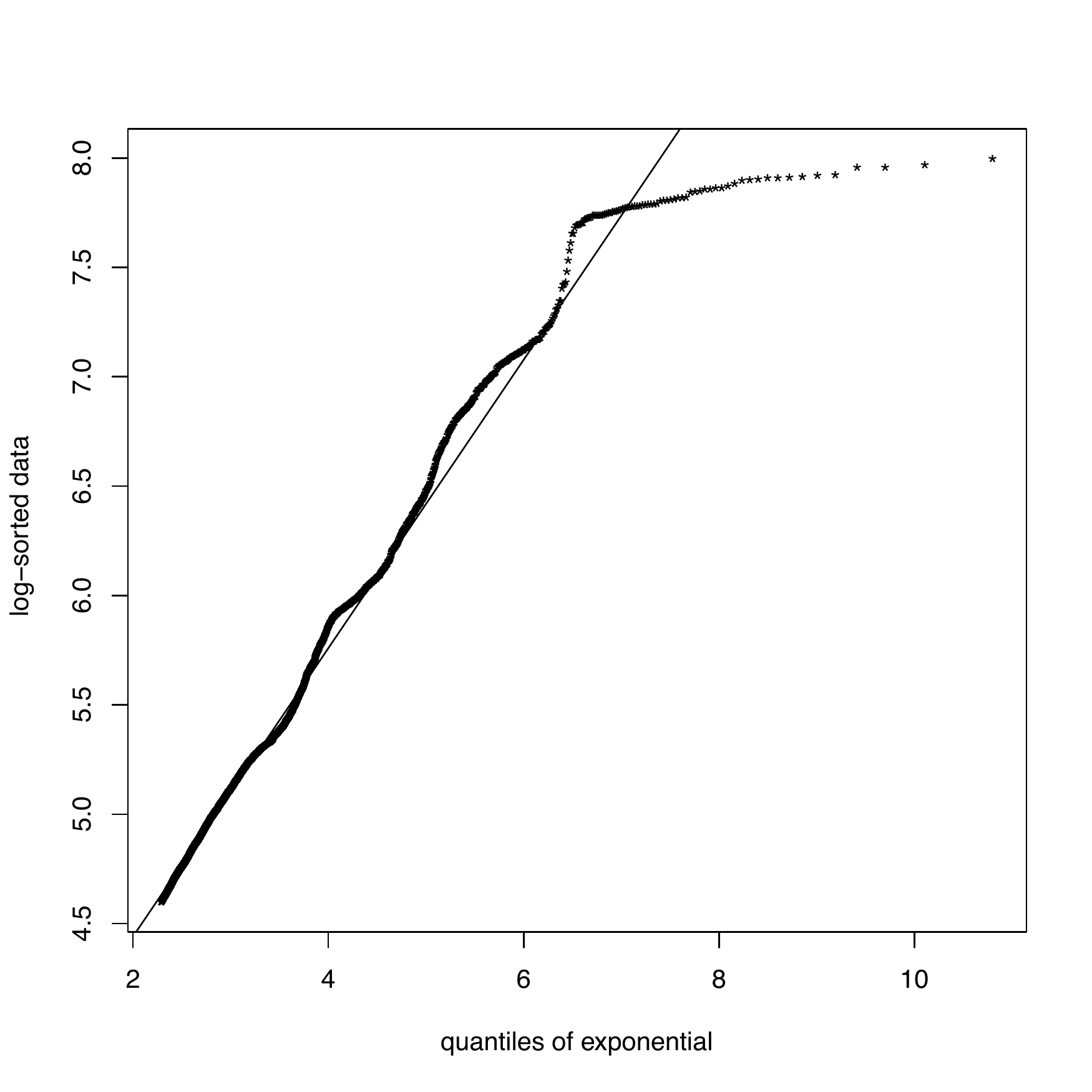} 
\end{tabular}
\caption{ Density and QQ plot (k = 5000) of the claim size distribution (total data points = 49323).}
\label{densityandQQplot}
\end{figure}

However, the density plot shown in Figure \ref{densityandQQplot} almost vanishes after the threshold 500, which suggests that there are few data points which are relatively very large compared to the rest and they are heavily influencing the estimate of $\alpha$. However, there are 459 data points which are bigger than 500. So, on one hand, we cannot discard the claim sizes which are bigger than $500$ as outliers, while on the other hand, a very small proportion of the data (459/49323 = 0.0093) is influencing the summary statistic, the QQ plot and the QQ estimate of $\alpha$. 

We redo the analysis for all the claim sizes which are less than 500. For this case, the summary statistics are tabulated in Table \ref{table:claimsizesummary500} and the density plot and the QQ plot \citep[page 97]{resnickbook:2007} are shown in Figure \ref{densityandQQplot500}. Although the data still seems to have a power-like tail, our estimate of $\alpha$ using the QQ estimator in this case is $\hat \alpha = 2.44$, which, to our dismay, suggests using version \eqref{normalthm} of Theorem \ref{mainthm}.

\begin{table}
\caption{Summary statistics for claim size data of size less than 500.}
\begin{center}
\begin{tabular}{|c|c|c|c|c|c|}
\hline Mean & Variance & First quartile &
Median &  Third quartile \\
\hline 37.38 & 3464.91 & 7.41 & 15.73 &
41.41\\
\hline
\end{tabular}
\end{center}
\label{table:claimsizesummary500}
\end{table}

 \begin{figure}
\begin{tabular}{cc}
\includegraphics[scale=0.4]{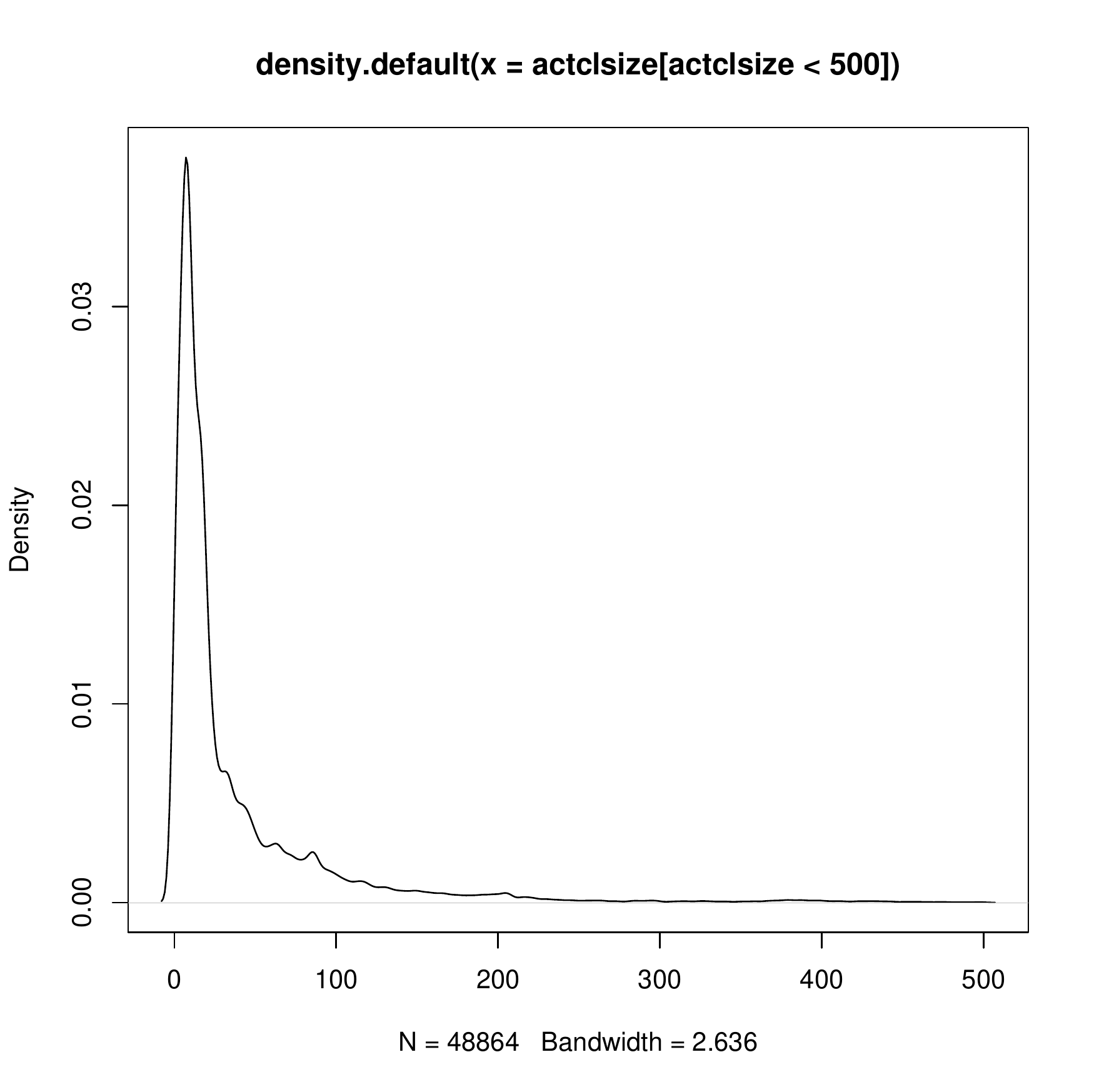}& \includegraphics[scale= 0.4]{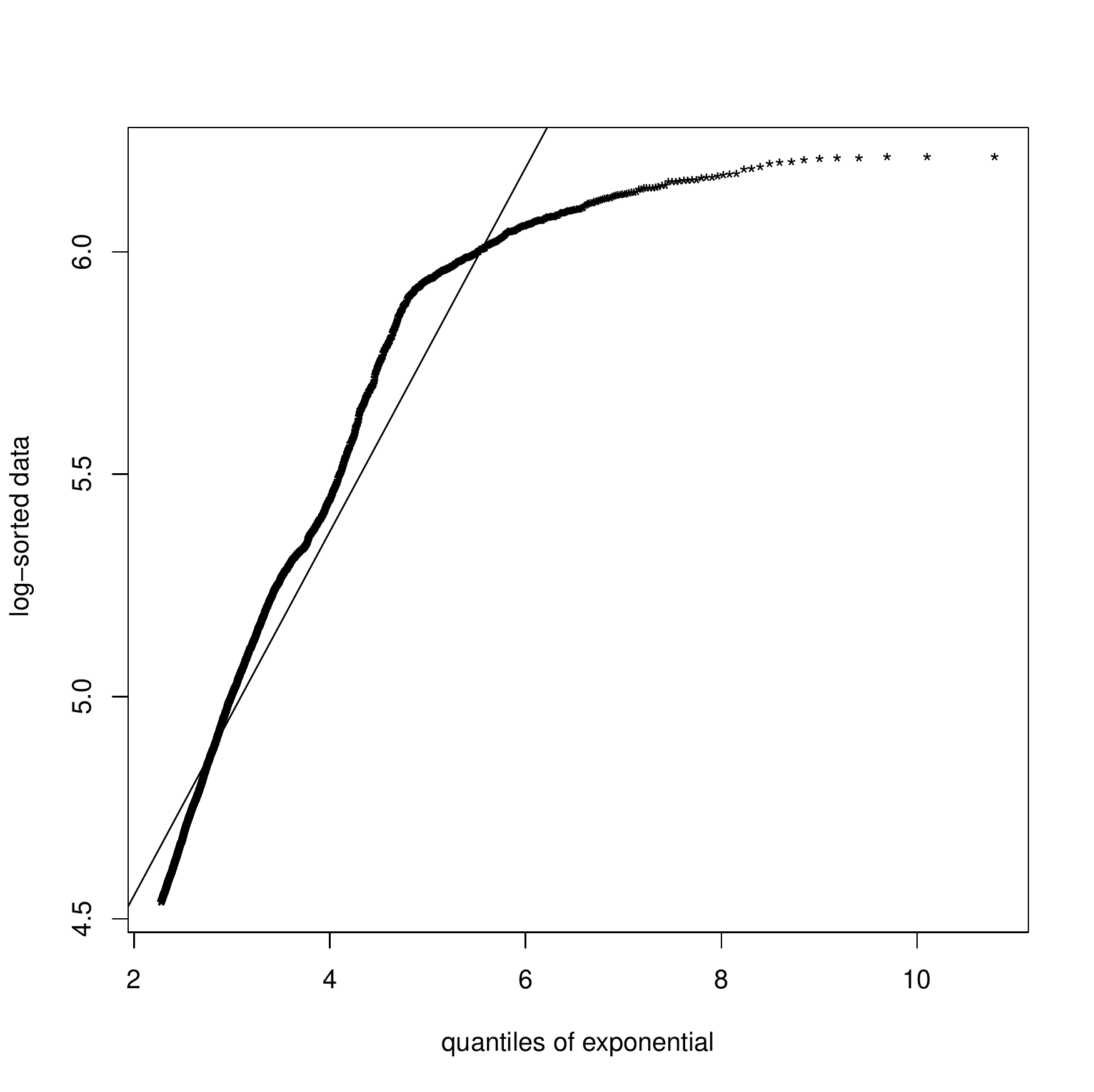} 
\end{tabular}
\caption{ Density and QQ plot (k = 5000) of the distribution of claim sizes which are less than 500(total data points = 48864).}
\label{densityandQQplot500}
\end{figure}

For comparison, we use both the versions \eqref{normalthm}  and \eqref{stablemeanexists} of Theorem \ref{mainthm} and compare the quantiles of the total warranty cost obtained from the two approximations to check robustness of our asymptotic approximation against model error.

\subsection{Analysis of the distribution of the times of claims measure $M(\cdot)$}\label{sec:examplearrivalclaim}
The sales data is needed to compute the times of claims measures $\{ M_j^n(\cdot): j = 1, 2, \cdots, n \}$. The sales data consists of the vehicle id which identifies the car and the date on which it was sold and is a record of 34807 cars sold over a period of 1116 days.

To analyze the claims data, we apply the technique in Section \ref{sec:estimationnormalthm}. To make any estimation about the distribution of the times of claims measure $M(\cdot)$, we obtain the data on $\{ M^n_j((i-1,i]: i = 0, 1, \cdots, W \}$, $j =1, 2, \cdots, n$. Recall, $W =1096$ days. For each vehicle id $j$,  note its date of sale $S^n_j$ and the dates on which it comes with a claim, say it comes with claims on $p$ dates $D_1 < D_2 < \cdots < D_p$. Now, we compute $C^n_{i,j} = D_i - S^n_j, i = 1, 2, \cdots, p.$ Then, we construct the measure $M_j^n(\cdot)$ as $\sum_{i=1}^p \epsilon_{C^n_{i,j}}(\cdot)$. In some cases, we found that $C^n_{i,j} < 0$ (claim honored before the car is sold), or $C^n_{i,j} > W$ (claim honored after the warranty period). We handled this as follows: if  $C^n_{i,j} < 0$, we make it $C^n_{i,j} = 0$ and if $C^n_{i,j} > W$, we make it $C^n_{i,j} = W.$ Thus, we obtain $\{ M^n_j((i-1,i]) := 1_{\{C^n_{i,j} = i \}}: i = 0, 1, \cdots, W \}$, $j =1, 2, \cdots, n$.

We estimate the expected times of claims measure $m(\cdot) = E[M(\cdot)]$ in a manner similar to \cite{kulkarni:resnick:2007}. The plots of  $\{ \left(i, \hat m_1( (i-1, i]) := \frac{1}{n} \sum_{j=1}^n M^n_j((i-1, i]) \right): i = 0, 1, \cdots, W =1096 \}$ and  $\{ \left( i, \hat m_1( (i-1, i] \right): i = 1, \cdots, W-1= 1095 \}$ are shown in Figure \ref{plot:arrivaltimesofclaims}, where $n = 34807$ is the total number of cars in our sales data. Clearly, the plot of $\{ \left( i, \hat m_1( (i-1, i] \right): i = 0, 1, \cdots, 1096\}$ indicates that the measure $m(\cdot)$ has two atoms at $0$ and $W = 1096$. So, we plot $\{\left( i, \hat m_1( (i-1, i] \right): i = 1, \cdots, 1095\}$ to infer the structure of the mean times of claims measure $m(\cdot)$ in the interval $(0, W)$. The linear appearance of $\{ \left( i, \hat m_1( (i-1, i] \right): i = 1, \cdots, 1095\}$ as shown in Figure \ref{plot:arrivaltimesofclaims} suggests that
for $0 < x < 1096$,
\begin{align*}
m(dx) = (ax + b)dx.
\end{align*}
By integrating, we get for $1 \le i \le 1095$, 
\begin{align*}
m( (i -1, i]) = a i + b - \frac{a}{2}.
\end{align*}
From our fitted line over $\{\left( i, \hat m_1( (i-1, i] \right): i = 1, \cdots, 1095\}$ as shown in Figure \ref{plot:arrivaltimesofclaims}, we obtain the estimates 
$$ \hat a = - 0.8872 \times 10^{-6}, \hskip 1 cm \hat b - \frac{\hat a}{2} = 0.1479 \times 10^{-2}.$$
We estimate $m(\{ 0 \})$ and $m(\{ W \})$ by
$$  \hat m_1 (\{ 0 \}) := \frac{1}{n} \sum_{j=1}^n M^n_j(\{0\}) = 0.1330, \hskip 1 cm \hat m_1 (\{ W \}) := \frac{1}{n} \sum_{j=1}^n M^n_j((W-1, W])= 0.0420. $$
Thus, we estimate the measure $m(\cdot)$ as $\hat m(dx) = (\hat ax + \hat b)dx + \hat m_1 (\{ 0 \})1_{\{x = 0\}} +  \hat m_1 (\{ W \})1_{\{x = W\}}$.

 \begin{figure}
\begin{tabular}{cc}
\includegraphics[scale=0.5]{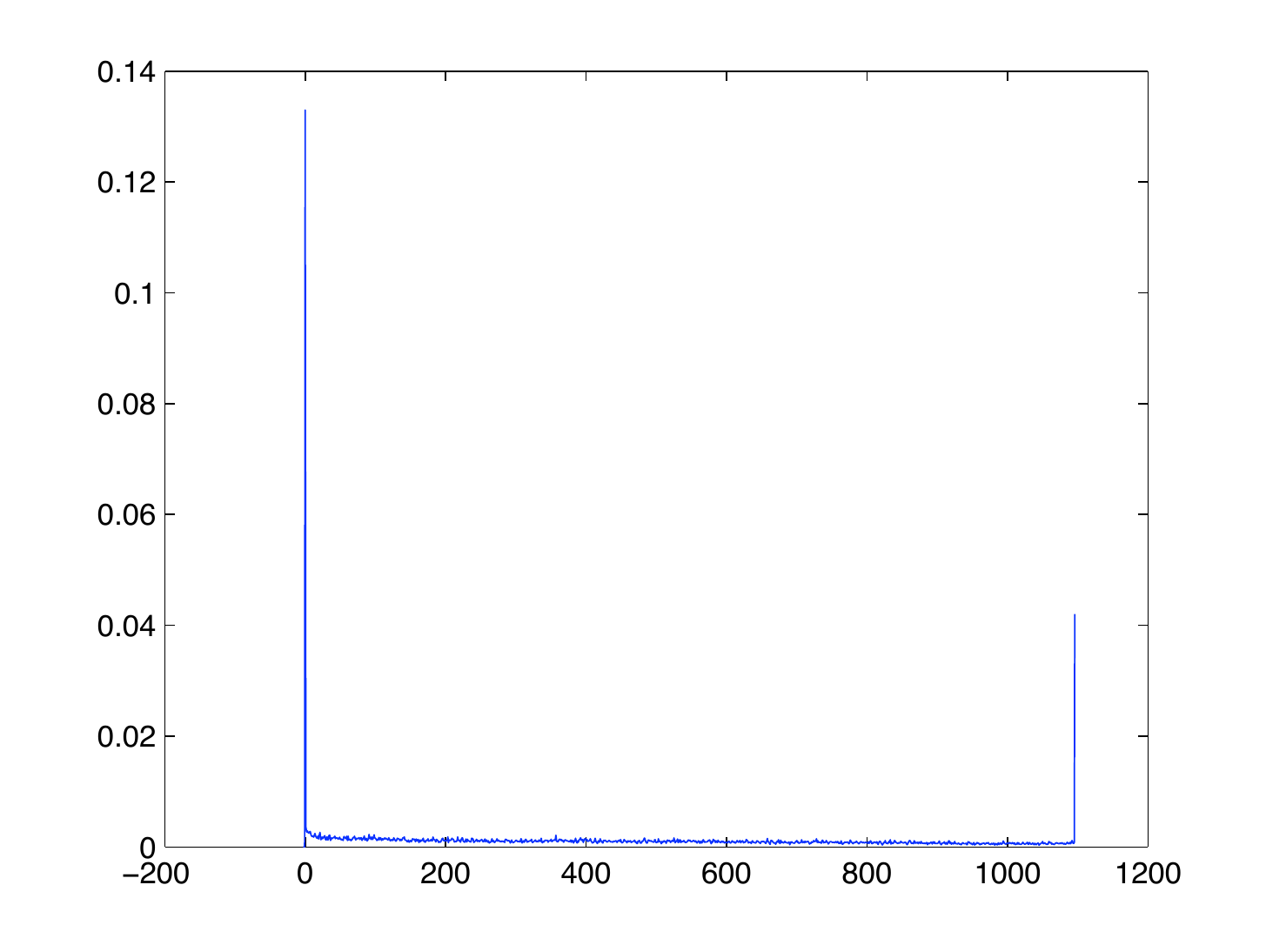}& \includegraphics[scale= 0.5]{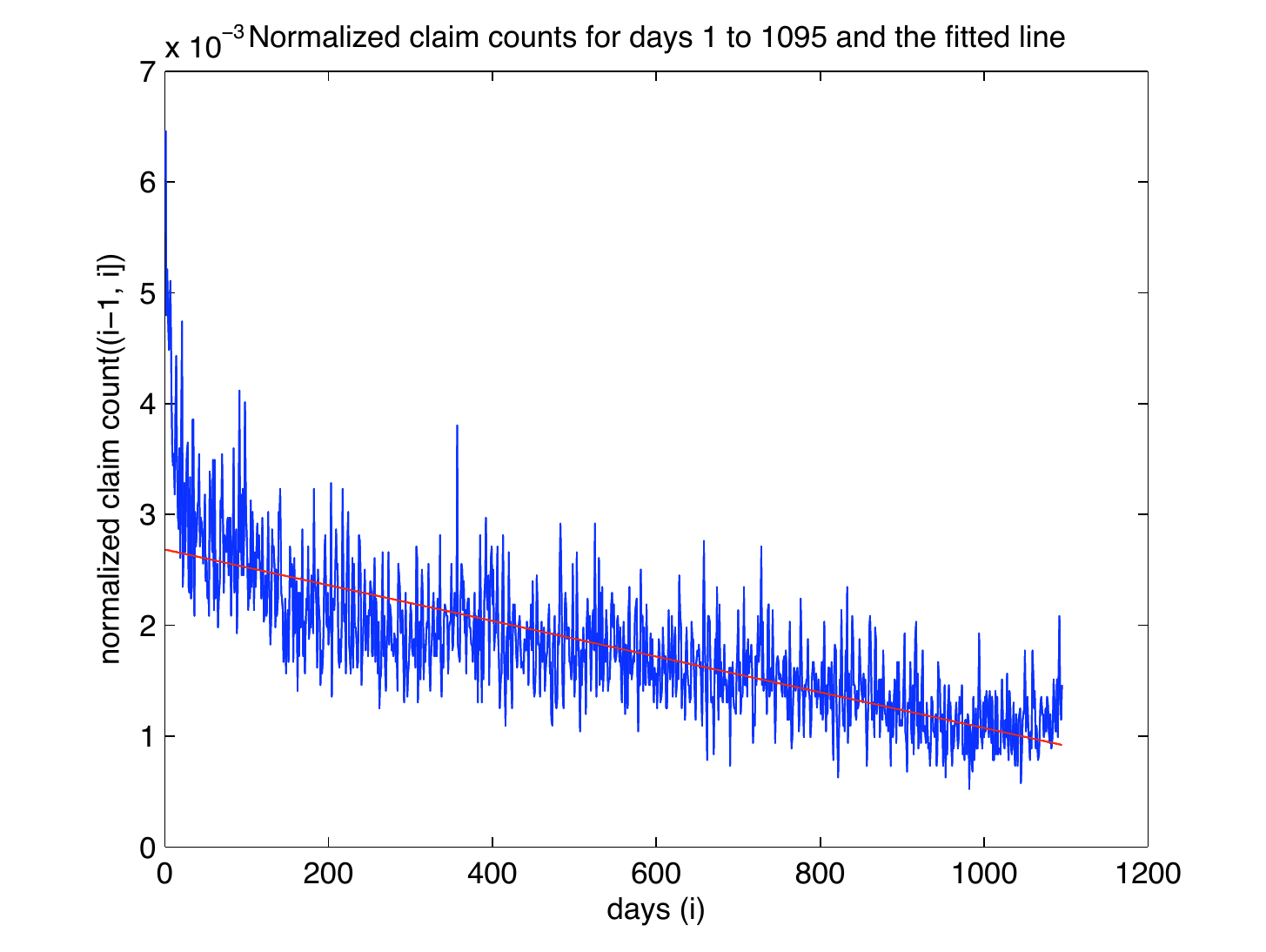} 
\end{tabular}
\caption{Plot of $\{ \left( i, \hat m_1( (i-1, i] \right): i = 0, 1, \cdots, 1096. \}$ and $\{ \left( i, \hat m_1( (i-1, i] \right): i = 1, \cdots, 1095.\}$ .}
\label{plot:arrivaltimesofclaims}
\end{figure}

 To estimate the parameters in the limit distribution of Theorem \ref{mainthm}, version \eqref{normalthm}, we use the estimators of $f_1(\cdot)$ and $f_2(\cdot)$ suggested in \eqref{f1estimation} and \eqref{f2estimation}.

\subsection{Analysis of the distribution of the sales process}\label{sec:examplesales}

We apply the technique explained in Section \ref{sec:estimationnormalthm}. We assume that $n \nu(\cdot)$ follows the Bass model \citep{bass:1969} for the sales period of 1116 days. We choose $n$ as the total sales in those 1116 days and so, $n = 34807$. We use the least squares method discussed in Section \ref{sec:estimationnormalthm} to obtain estimates $(\hat B, \hat C) = ( 4.0149 \times 10^{-4}, 1.6738 \times 10^{-2})$. The time plot of daily count of sales with fitted Bass is given in Figure \ref{plot:salesfittedbass}. 

The fit of Bass model is even better for 12-day counts of sale as shown in Figure \ref{plot:salesfittedbass}. In case of 12-day counts, we obtain the least square estimates $(\hat B, \hat C) = ( 4.0279 \times 10^{-4}, 1.6740 \times 10^{-2})$, which are not too different from the estimates obtained from daily counts. This gives us confidence in our estimates $(\hat B, \hat C) = ( 4.0149 \times 10^{-4}, 1.6738 \times 10^{-2})$ obtained from daily counts and we use these estimates for the following estimation procedure.

 \begin{figure}
\begin{tabular}{cc}
\includegraphics[scale=0.5]{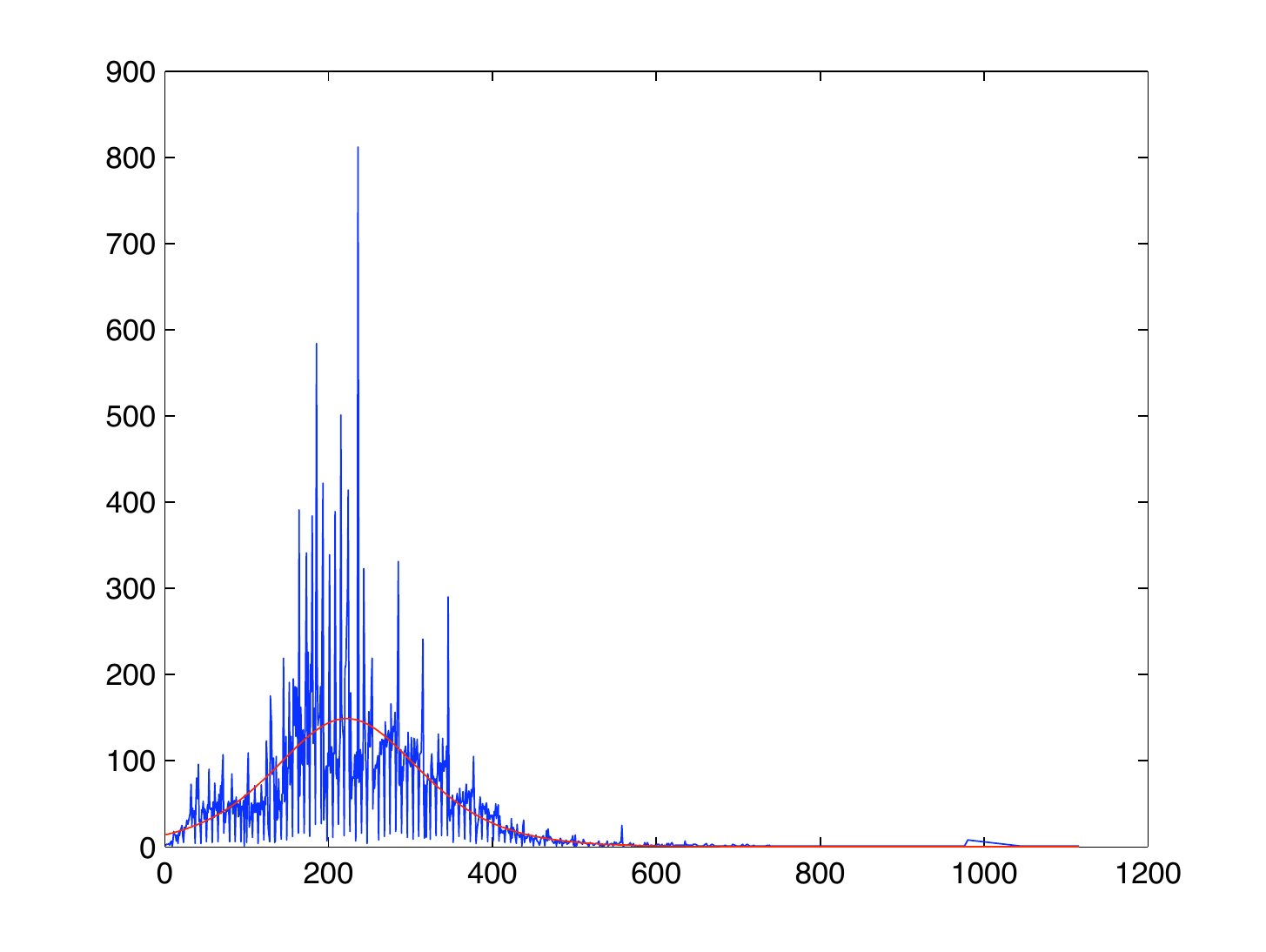}& \includegraphics[scale= 0.5]{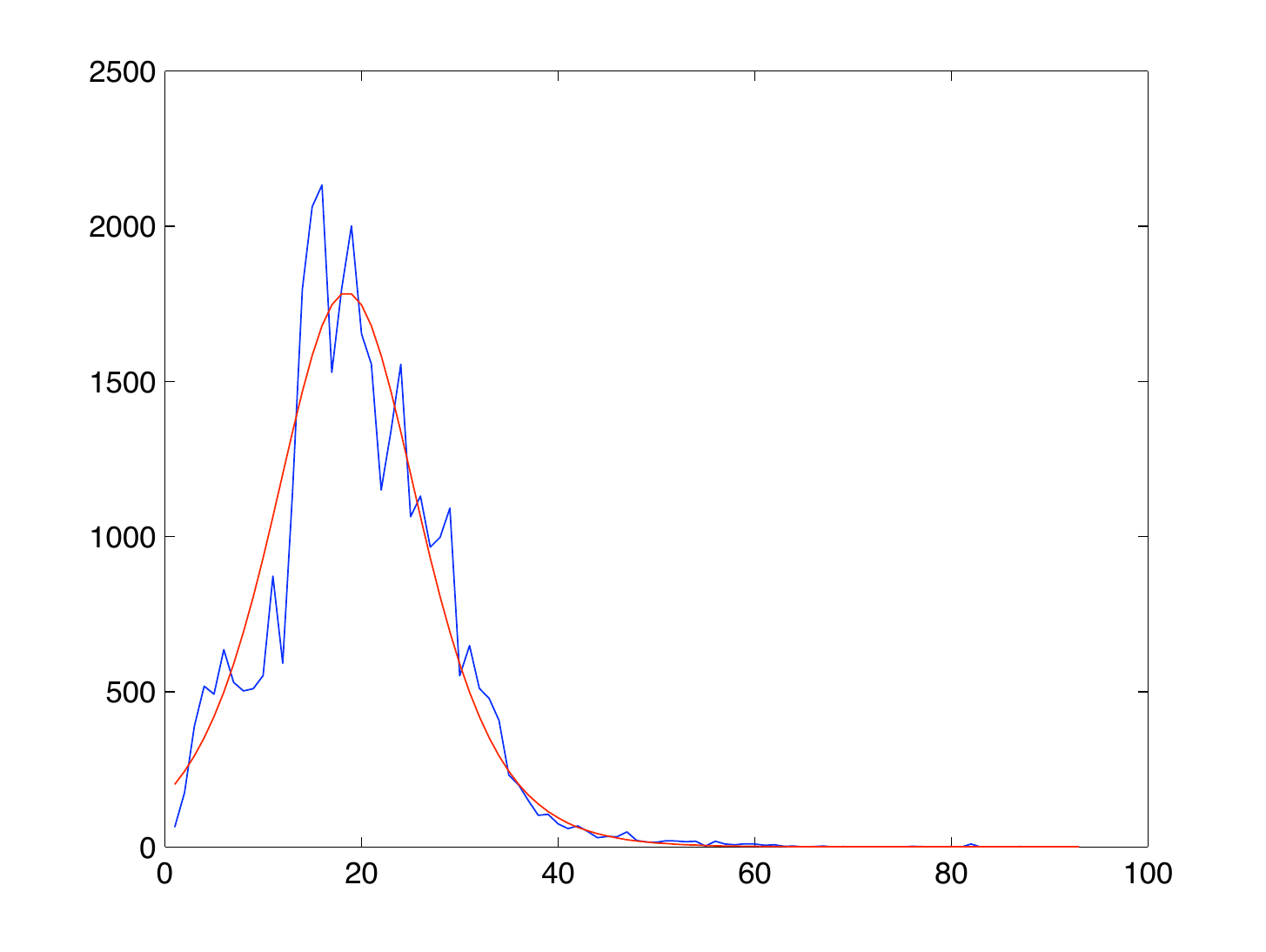} 
\end{tabular}
\caption{Left: Daily counts of sale with fitted Bass, right: 12-day counts of sale with fitted Bass.}
\label{plot:salesfittedbass}
\end{figure}

Recall from Section \ref{sec:estimationnormalthm}, that we have assumed that $\{ Z(t) \}$ is a stationary Gaussian process and the centered and scaled residuals $\{ j_t \}$ will act as surrogates of $\{ Z(t) \}$. We show the time plot and the normal QQ plot of $\{ j_t \}$ in Figure \ref{plot:salesfittedresidual}.

\begin{figure}
\begin{tabular}{cc}
\includegraphics[scale=0.5]{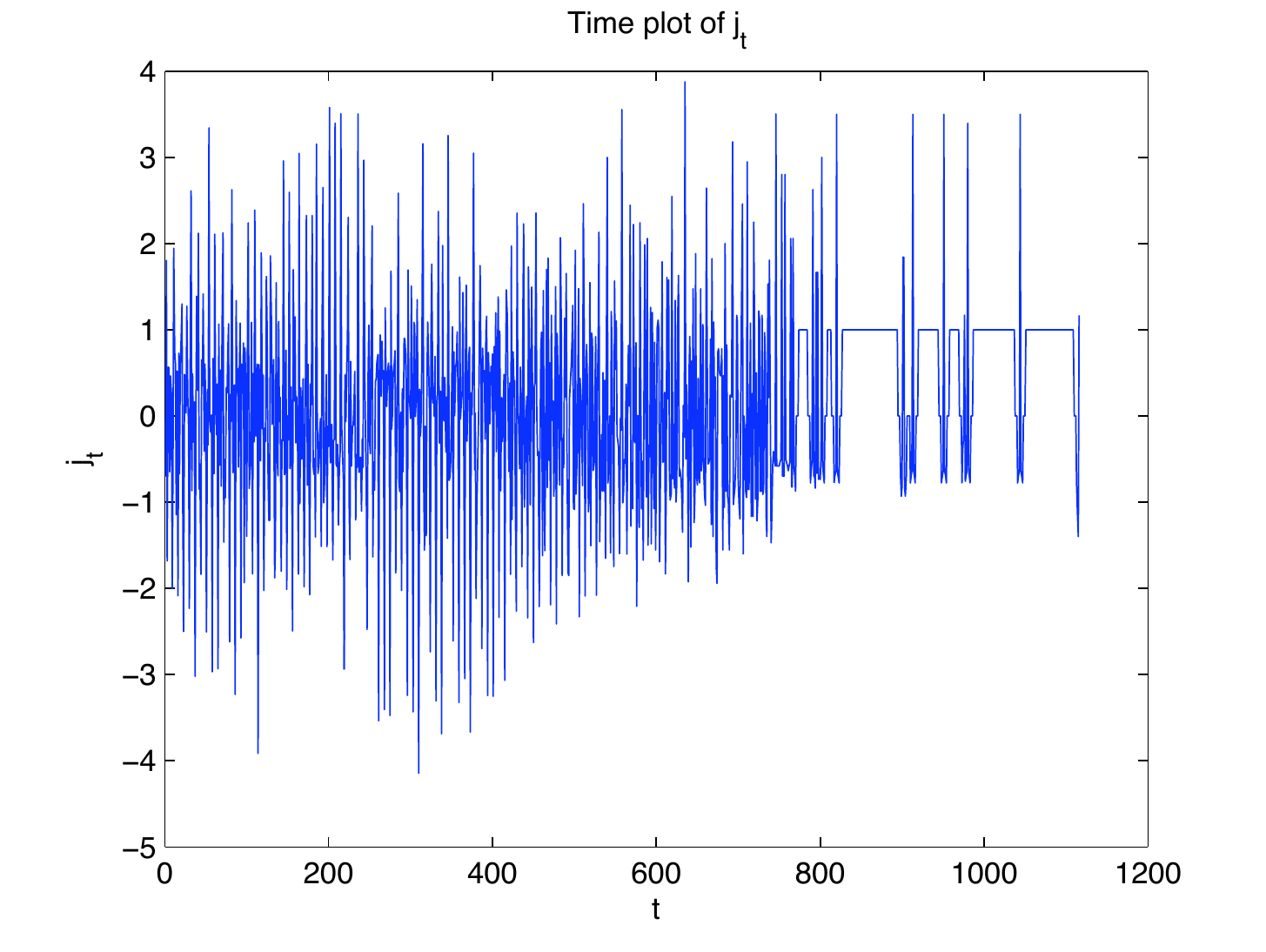}& \includegraphics[scale= 0.5]{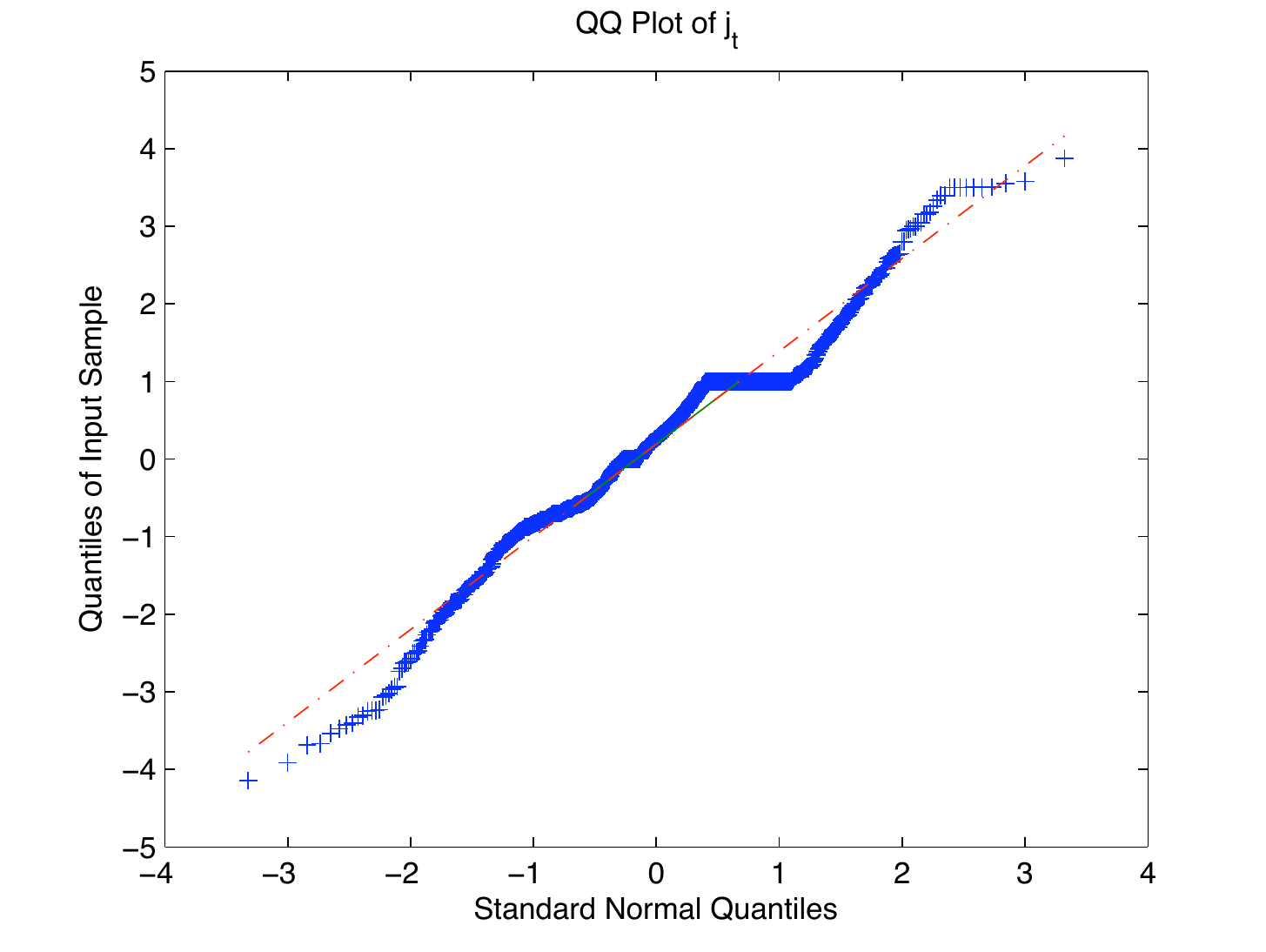} 
\end{tabular}
\caption{Left: Time plot of $\{j_t \}$, right: QQ plot of $\{j_t \}$.}
\label{plot:salesfittedresidual}
\end{figure}

Now, we follow the procedure described in Section \ref{sec:estimationnormalthm} to estimate $\{ \hat \theta (t), \hat \gamma( t, s) \}$.

\subsection{Estimation of quantiles} First, we have to decide the time period for which we  want to approximate the distribution of total warranty cost. We choose two consecutive periods of length 91 days starting from the last sales in date, i.e. we choose the next two quarters from the last sales date. We denote these two quarters as $[0, T]$ and $[T, 2T]$, and accordingly adjust our clock. We assume that $n$ remains the same for the entire period $[-1116, 2 \times 91]$ (recall that we have sales data for a period of $1116$ days).

We compute quantiles of total warranty cost using both the stable and normal approximations and compare them.

For approximation using version \eqref{normalthm} (normal) of Theorem \ref{mainthm}, we follow the method described in Section \ref{sec:estimationnormalthm} to obtain estimates of the six parameters: $c_1, c_2, \tilde \mu$, $\tilde \sigma^2$, $E$ and $V$. However, note that the parameters $c_1, c_2, \tilde \mu$ and $\tilde \sigma^2$ depend on the time-period we are considering, that is the estimates will be different for time-periods $[0, T]$ and $[T, 2T]$. Table \ref{table:estparameters} gives estimates of these parameters for the time periods $[0, T]$ and $[T, 2T]$. For both the time periods $[0, T]$ and $[T, 2T]$, we estimate the parameters $\hat E = 47.53 $ and $\hat V = 18273.14$ using estimates from Table \ref{table:claimsizesummaryfulldata}.

\begin{table}
\caption{Estimated parameters $c_1, c_2, \tilde \mu$ and $\tilde \sigma^2$.}
\begin{center}
\begin{tabular}{|c|c|c|c|c|}
\hline Time-period & $\hat c_1$ & $ \hat c_2$ & $\hat {\tilde \mu}$ & $\hat {\tilde \sigma}^2$ \\
\hline $[0, T]$ & 0.0614 & 0.0887 & 1.0210 & 1.5568\\
\hline $[T, 2T]$ & 0.0540 & 0.0818 & 0.8817 & 0.9712\\
\hline
\end{tabular}
\label{table:estparameters}
\end{center}
\end{table}

For approximation using version \eqref{stablemeanexists} (stable) of Theorem \ref{mainthm}, we follow the method of estimation described in Section \ref{sec:estimationstablemeanexists}. We obtain estimates of $c_1$ for the time-periods $[0, T]$ and $[T, 2T]$ from Table \ref{table:estparameters}. We estimate the parameter $E = 47.53 $ using estimates from Table \ref{table:claimsizesummaryfulldata}. We estimate $\hat \alpha = 1.52$ using the QQ estimator (k = 5000)\citep[page 97]{resnickbook:2007} and obtain $\hat b(n) = n^{1/\hat \alpha}$. Using this estimate of $\alpha$ and \eqref{stablemeanpara}, we estimate of the parameters of the distribution of $Z_{\alpha}(1)$ (following the parametrization of \cite{samorodnitsky:taqqu:1994}, as described in Section \ref{sec:estimationstablemeanexists}) as $\hat \mu = 0, \hskip 0.1 cm \hat \sigma = 1.8688$ and $\hat \beta = 1$. We use J. P. Nolan's software available at http://academic2.american.edu/$\sim$jpnolan/stable/stable.html to compute the stable quantiles.

The quantiles of total warranty cost using both the approximations: version \eqref{normalthm} and version \eqref{stablemeanexists} of Theorem \ref{mainthm}, are listed in Table \ref{tablequantiles}. Note that the quantiles of total warranty cost computed using version \eqref{normalthm} of Theorem \ref{mainthm} are bigger than those computed using  version \eqref{stablemeanexists} of Theorem \ref{mainthm}, but the difference is not huge as one might have expected since version \eqref{stablemeanexists} is applicable for the heavy tailed data whereas version \eqref{normalthm} is applicable for the light-tailed data.

\begin{table}
\caption{Quantiles for the total cost on warranty claims.}
\begin{center}
\begin{tabular}{|c|c|c|c|c|}
\hline 
 & \multicolumn{2}{|c|}{Time period $[0, T]$} & \multicolumn{2}{|c|}{ Time period $[T, 2T]$}\\
\hline
 $p$ & $p$-th quantile & $p$-th quantile  & $p$-th quantile  & $p$-th quantile  \\
 & using version \eqref{normalthm}  & using version \eqref{stablemeanexists}  & using version \eqref{normalthm}  & using  version \eqref{stablemeanexists} \\
& of Theorem \ref{mainthm} & of Theorem \ref{mainthm} & of Theorem \ref{mainthm} & of Theorem \ref{mainthm}\\
\hline
0.50 & 110,694.91  & 101,448.27& 97,219.87 & 89,224.58\\
\hline
0.75 & 119,449.01  & 101,791.20 &  104,532.99 & 89,539.76\\
\hline
0.80 & 121,618.18  & 101,897.93 &  106,345.11 & 89,637.85\\
\hline
0.85 & 124,146.62  & 102,040.29 & 108,457.35 & 89,768.68\\
\hline
0.90 & 127,327.97  & 102,258.94 & 111,115.03 & 89,969.64\\
\hline
0.95 & 132,043.22  & 102,723.28 & 115,054.12 & 90,396.39\\
\hline
0.99 & 140,888.23  &  104,857.40 & 122,443.19 & 92,357.76\\
\hline
\end{tabular}
\label{tablequantiles}
\end{center}
\end{table}

We computed the actual number of claims and the total warranty cost for the periods $[0, T]$ and $[T, 2T]$ from our data. Though we cannot test the fit of a distribution from a single observation, we do some sanity checks to decide how well the approximations work. 

Start with the actual number of claims. The number of claims in $[0, T]$ is $R^n_{[0, T]} = 2352$ and the number of claims in $[T, 2T]$ is $R^n_{[T, 2T]} = 1516.$ Let $AF_{R^n_P}(\cdot)$ be the approximation of the distribution function of the total number of claims that arrived in period $P$ using Theorem \ref{numberclaims}. We compute $AF_{R^n_P}(R^n_P)$ for both the time periods $P = [0, T]$ and $P = [T, 2T]$. If $AF_{R^n_P}(\cdot)$ were the actual distribution function of $R^n_P$, then $AF_{R^n_P}(R^n_P)$ would be uniform on $[0, 1]$. Our computed $AF_{R^n_P}(R^n_P)$ values are $AF_{R^n_{[0, T]}}(R^n_{[0, T]}) = 0.5381$ and $AF_{R^n_{[T, 2T]}}(R^n_{[T, 2T]}) = 0.0029$. The probability that a $Uniform([0, 1])$ random variable take a value more extreme than a number $a$ is $2\min \left\{ P[U \le a], P[U > a]\right\}$, where $U \sim Uniform([0, 1])$, which, for the numbers $\{ AF_{R^n_P}(R^n_P): P = [0, T], [T, 2T] \}$ are 0.9238 and 0.0058 respectively. These values suggest that $\{ AF_{R^n_P}(\cdot): P = [0, T], [T, 2T] \}$ may be reasonable fits for the distributions of $\{R^n_P: P = [0, T], [T, 2T] \}$.

Now, we compute the actual costs for the time periods $[0, T]$ and $[T, 2T]$, denoted by $\cost([0, T])$ and $\cost([T, 2T])$ respectively. Let $A_1F_P(\cdot)$ and $A_2F_P(\cdot)$ be the approximate distribution functions of the total warranty cost using versions \eqref{normalthm} and \eqref{stablemeanexists} of Theorem \ref{mainthm} respectively for the period $P$. The computed values of $\{ A_iF_P(\cost(P)), i = 1, 2, P = [0, T], [T, 2T] \}$ are noted in Table \ref{tablefitting}. For computing $\{ A_2F_P(\cost(P)), P = [0, T], [T, 2T] \}$, we used J. P. Nolan's software available at http://academic2.american.edu/$\sim$jpnolan/stable/stable.html. If $A_iF_P(\cdot)$ is the actual distribution of $\cost(P)$, then $A_iF_P(\cost(P))$ would be uniform on $[0, 1]$. The probabilities that a $Uniform([0, 1])$ random variable take a value more extreme (as explained before in the previous paragraph) than $A_1F_P(\cost(P))$ is greater than that of $A_2F_P(\cost(P))$ for both the periods $[0, T]$ and $[T, 2T]$, which suggest that for our data on sales and warranty claims of cars, the approximation of the distribution of total warranty cost using version \eqref{normalthm} of Theorem \ref{mainthm} is better than the approximation using version \eqref{stablemeanexists} of Theorem \ref{mainthm} for both the periods $[0, T]$ and $[T, 2T]$. Comparing the quantiles of $\{ A_iF_P(\cdot), i = 1, 2, P = [0, T], [T, 2T] \}$ given in Table \ref{tablequantiles} with the actual costs $\{ \cost(P): P = [0, T], [T, 2T] \}$ given in Table \ref{tablefitting}, we arrive at the same conclusion.

\begin{table}
\caption{Computed values of $\{ A_iF_P(\cost(P)), i = 1, 2, P = [0, T], [T, 2T] \}$.}
\begin{center}
\begin{tabular}{|c|c|c|c|}
\hline 
 Time period   & Actual cost in & Approximation using & Approximation using\\
 $P$ & time-period $P$ & version \eqref{normalthm}  of Theorem \ref{mainthm} & version \eqref{stablemeanexists} of Theorem \ref{mainthm}\\
 & $\cost(P)$ & $A_1F_P(\cost(P))$ & $A_2F_P(\cost(P))$\\
 \hline 
 $[0, T]$ & 148,180.60 & 0.9981 & 0.9998\\
 \hline
 $[T, 2T]$ & 98,992.90 & 0.5649 & 0.9983 \\
\hline
\end{tabular}
\label{tablefitting}
\end{center}
\end{table}

\section{Concluding remarks}

We have approximated the distribution of the total warranty claims expenses incurred in a fixed period. Our assumptions on the distribution of the sales process $N^n(\cdot)$ and the times of claims measure $M(\cdot)$ are mild and hence our approximation is applicable in a general context. However, we have introduced a lot of independence in our modeling. For example, we have assumed that the claim sizes are iid, but in practice this may not true. Similarly, the sales process $N^n(\cdot)$ and the times of claims measures $\{ M_j^n(\cdot) \}$ or the sales process $N^n(\cdot)$ and the claim sizes may be dependent. We have ignored such dependences, but allowing for dependence might lead to a more realistic modeling and better approximation.

Our estimation procedure is mostly non-parametric and hence generally applicable. However, we have assumed a parametric form for $\nu(\cdot)$ using the model proposed by \citet{bass:1969}. Estimating $\nu(\cdot)$ non-parametrically might lead to robustness against model error. 

Another issue is the choice of $n$ in our approximation. We interpret $n$ as a measure of sales volume. Though total sales in our observed sales data is a natural candidate for $n$ as we have argued, it is not the only candidate. Since $n$ plays an important role in the approximation, the choice of $n$ might have a significant impact. 

In Section \ref{cardata}, we have demonstrated the applicability of our method. However, a company deciding on reserves to cover warranty cost next quarter should use a more complete and carefully collected dataset.

\section{proofs}\label{sec:proofs}

To prove the asymptotic results, first we state and prove two lemmas.

\begin{lem}\label{prod_approx_lemma}
\rm{Under the Assumptions \ref{freerepl_sale}, \ref{freerepl_claim1} and \ref{freerepl_claim2} of Section \ref{freereplass}, for $\lambda \in \R,$
\begin{align}\label{prod_approx}
 \prod_{ \{j : S_j^n \in [-W, T] \}}  E^{S_j^n} \left[ \exp \left( \right. \right. & \left. \left. i \lambda n^{-1/2} \left(\delta( S_j^n) - f_1(S^n_j) \right) \right)   \right] \nonumber \\
& \qquad = \prod_{ \{j : S_j^n \in [-W, T] \}}  E^{S_j^n} \left[ 1  - \lambda^2 {(2n)}^{-1} {\left(\delta(S^n_j) - f_1(S_j^n)\right)}^2  \right]  + o_p(1).
\end{align}
}
\end{lem}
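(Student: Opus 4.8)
The plan is to read the right-hand side of \eqref{prod_approx} as the second-order Taylor approximation of each characteristic-function factor on the left, reduce the product comparison to a comparison of sums via a telescoping inequality, and then control the accumulated error by a Lindeberg-type truncation argument that uses only the second moment of $M$.

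First I would abbreviate $Y_j = \delta(S_j^n) - f_1(S_j^n)$ and $X_j = n^{-1/2} Y_j$, and set $a_j = E^{S_j^n}\!\left[\exp(i\lambda X_j)\right]$ and $b_j = E^{S_j^n}\!\left[1 - \tfrac{\lambda^2}{2n}Y_j^2\right] = 1 - \tfrac{\lambda^2}{2n}f_2(S_j^n)$, the last equality because the conditional variance of $\delta(S_j^n)$ given $S_j^n$ is $f_2(S_j^n)$. Since $f_1(x) = E[\delta(x)]$, the independence in Assumption \ref{freerepl_claim1} gives $E^{S_j^n}[X_j] = 0$, so the elementary bound $|e^{iu} - 1 - iu + u^2/2| \le \min(|u|^3/6,\, u^2)$ applied with $u = \lambda X_j$ yields
\[
|a_j - b_j| \le E^{S_j^n}\!\left[\min\!\left(\tfrac{|\lambda|^3}{6}|X_j|^3,\ \lambda^2 X_j^2\right)\right].
\]
Because $|a_j| \le 1$ and, once $n$ is large enough that $\tfrac{\lambda^2}{2n}f_2 \le 1$ (here $f_2$ is bounded by $E[M^2([0,W])]<\infty$ from Assumption \ref{freerepl_claim2}), also $0 \le b_j \le 1$, the telescoping inequality $\bigl|\prod_j a_j - \prod_j b_j\bigr| \le \sum_j |a_j - b_j|$ applies. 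Thus it suffices to prove $\sum_{\{j:\, S_j^n \in [-W,T]\}} |a_j - b_j| = o_p(1)$.

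The main obstacle is that only a second moment of $M$ is assumed, so the naive bound $\sum_j E^{S_j^n}|X_j|^3$ need not vanish. I would instead truncate: for fixed $\epsilon>0$, split each conditional expectation over $\{|Y_j| \le \epsilon\sqrt{n}\}$ and $\{|Y_j| > \epsilon\sqrt{n}\}$, applying the cubic bound on the first event and the quadratic bound on the second, to get
\[
|a_j - b_j| \le \frac{|\lambda|^3 \epsilon}{6n}\, f_2(S_j^n) + \frac{\lambda^2}{n}\, \eta_n,
\]
where $\eta_n = \sup_x E\!\left[Y^2 \, 1_{\{|Y| > \epsilon\sqrt{n}\}}\right]$ with $Y = \delta(x) - f_1(x)$. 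The uniform domination $|\delta(x) - f_1(x)| \le M([0,W]) + E[M([0,W])] =: G$, valid for every $x$ with $E[G^2]<\infty$, gives $\eta_n \le E\!\left[G^2\, 1_{\{G > \epsilon\sqrt{n}\}}\right] \to 0$ by dominated convergence.

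Finally, summing over $j$ and rewriting the sum as an integral against the sales measure gives
\[
\sum_{\{j:\, S_j^n \in [-W,T]\}} |a_j - b_j| \le \frac{|\lambda|^3\epsilon}{6}\int_{[-W,T]} f_2\, d\!\left(\tfrac{N^n}{n}\right) + \lambda^2\, \eta_n\, \frac{N^n(T)}{n}.
\]
From \eqref{eqn:salesprocess} we have $N^n/n \Rightarrow \nu$ (a deterministic limit, hence convergence in probability), so $N^n(T)/n = O_p(1)$ and, since $f_2$ is bounded, also $\int f_2\, d(N^n/n) = O_p(1)$. Consequently the second term is $o_p(1)$, while the first is bounded in probability by a constant multiple of $\epsilon$; letting $n\to\infty$ and then $\epsilon \downarrow 0$ forces the whole sum to be $o_p(1)$, which establishes \eqref{prod_approx}.
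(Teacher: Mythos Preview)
Your proof is correct and follows essentially the same route as the paper: telescoping $\bigl|\prod a_j-\prod b_j\bigr|\le\sum|a_j-b_j|$, the Taylor remainder bound $|e^{iu}-1-iu+u^2/2|\le\min(|u|^3/6,u^2)$, the uniform domination $|\delta(x)-f_1(x)|\le M([0,W])+m([0,W])$, and dominated convergence together with $N^n/n\stackrel{P}{\to}\nu$. The only cosmetic difference is that the paper keeps the minimum inside a single expectation and applies dominated convergence directly to $E\!\left[\min\!\bigl(\tfrac{|\lambda|}{6\sqrt{n}}G^3,\,G^2\bigr)\right]\to 0$, whereas you unpack that minimum via an explicit $\epsilon$-truncation and a double limit; the content is the same.
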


\begin{proof}[Proof of Lemma \ref{prod_approx_lemma}]
Using the fact that for all $n$, $|\prod_{i=1}^n a_i - \prod_{i=1}^n b_i | \le \sum_{i=1}^n |a_i - b_i|,$ for $|a_i|, |b_i| \le 1$, we get
\begin{align}\label{prod_approx_proof}
 \Big{|} & \prod_{ \{j : S_j^n \in [-W, T] \}}E^{S_j^n} \left[ \exp \left( i \lambda n^{-1/2}  \left(\delta( S_j^n) - f_1(S^n_j) \right) \right)  \right]  \nonumber\\
& \hskip 5 cm  - \prod_{ \{j : S_j^n \in [-W, T] \}}  E^{S_j^n} \left[ 1  - \lambda^2 {(2n)}^{-1} {\left(\delta(S^n_j) - f_1(S_j^n)\right)}^2  \right] \Big{|} \nonumber\\
 &\le \sum_{ \{j : S_j^n \in [-W, T] \}} \Big{|} E^{S_j^n} \left[ \exp \left( i \lambda n^{-1/2} \left(\delta( S_j^n) - f_1(S^n_j) \right) \right)  - 1 +  \lambda^2{(2n)}^{-1} {\left(\delta(S^n_j) - f_1(S_j^n)\right)}^2  \right]  \Big{|} \nonumber \\
 &= \sum_{ \{j : S_j^n \in [-W, T] \}} \Big{|} E^{S_j^n} \left[ \exp \left( i \lambda n^{-1/2} \left(\delta( S_j^n) - f_1(S^n_j) \right) \right)  - 1 - i \lambda n^{-1/2} \left(\delta( S_j^n) - f_1(S^n_j) \right) \right.\nonumber \\
 & \hskip 8 cm  \left. +  \lambda^2{(2n)}^{-1} {\left(\delta(S^n_j) - f_1(S_j^n)\right)}^2  \right]  \Big{|} \nonumber \\
&  \le \sum_{ \{j : S_j^n \in [-W, T] \}} E^{S_j^n} \left[ \left( \frac{\lambda^3}{6n\sqrt{n}} {|\delta(S^n_j) - f_1(S_j^n)|}^3\right) \wedge \left( \frac{\lambda^2}{n}{\left(\delta(S^n_j) - f_1(S_j^n)\right)}^2 \right) \right] .\nonumber\\
\intertext{Since for all $x$, $\delta(x)$ and $f_1(x)$ are bounded by $M([0,W])$ and $m([0, W])$ respectively, the above quantity is bounded by}
& \sum_{ \{j : S_j^n \in [-W, T] \}} E \left[ \left[ \frac{\lambda^3}{6n\sqrt{n}} {\left(M([0, W]) + m([0, W])\right)}^3 \right] \wedge  \left[ \frac{\lambda^2}{n} {\left(M([0, W]) + m([0, W])\right)}^2 \right] \right] \nonumber \\
&   \le \frac{\lambda^2N^n([-W, T])}{n}E \left[ \left[ \frac{\lambda}{6\sqrt{n}}{\left(M([0, W]) + m([0, W])\right)}^3\right] \wedge \left[ {\left(M([0, W]) + m([0, W])\right)}^2 \right] \right]  \nonumber \\
& \stackrel{P}{\rightarrow} 0.
\end{align}

The convergence in the last step holds by noting that $N^n([-W, T])/n \stackrel{P}{\rightarrow} \nu([-W, T])$ by Assumption \ref{freerepl_sale} of Section \ref{freereplass} and the quantity within the expectation converges to 0 using the dominated convergence theorem. To understand how we use the dominated convergence theorem, first note that the quantity inside the expectation is dominated by ${\left(M([0, W]) + m([0, W])\right)}^2,$ which has a finite expectation by Assumption \ref{freerepl_claim2} of Section \ref{freereplass}. On the other hand, the quantity inside the expectation is also dominated by  $\frac{\lambda}{6\sqrt{n}}{\left(M([0, W]) + m([0, W])\right)}^3,$ which converges to 0 almost surely. Hence, using the dominated convergence theorem, we get as $n \to \infty$,
\begin{equation*}
E \left[ \left[ \frac{\lambda}{6\sqrt{n}}{\left(M([0, W]) + m([0, W])\right)}^3\right] \wedge \left[ {\left(M([0, W]) + m([0, W])\right)}^2 \right] \right] \rightarrow 0.
\end{equation*}
\end{proof}

The sales process $N^n(\cdot)$ is a non-decreasing process on $[-W, T]$ and hence, induces a measure on $[-W, T]$. In the following, we refer to $N^n(\cdot)$ to mean both the sales process in $D([-W, T])$ and the measure it induces. It should be clear from the context what we mean by $N^n(\cdot)$. The same rule of notation holds for the non-decreasing function $\nu(\cdot)$ defined in \eqref{eqn:salesprocess}. Now, we state the second lemma.

\begin{lem}\label{use_sales_conv}
\rm{ Under Assumptions \ref{freerepl_sale}, \ref{freerepl_claim1.5} and \ref{freerepl_claim2} of Section \ref{freereplass}, 
\begin{enumerate}
\item \label{firstpart} The integral of $f_1(\cdot)$ (defined in \eqref{define_f1}) with respect to the centered and scaled sales process converges weakly to a Gaussian random variable, i.e.
\begin{equation}\label{fubini_lem_1}
 \int_{[-W, T]} f_1(x) \left(\frac{ N^n - n\nu}{\sqrt{n}} \right) (dx) \Rightarrow \int_{[0, W]}   \chi \left(N^{\infty} \right)(u) \tilde m(du),
\end{equation}
where $\nu(\cdot)$ and $N^{\infty}(\cdot)$ are given in \eqref{eqn:salesprocess}, the measure $\tilde m(\cdot)$ defined in \eqref{define_tilde_m}, the function $\chi(\cdot)$ is defined in \eqref{defineg} and the Gaussian random variable $\int_{[0, W]} \chi \left(N^{\infty} \right)(u) \tilde m(du)$ has mean $\tilde \mu$ and variance $\tilde \sigma^2$ given in \eqref{dotmudotsigma}.
\item \label{secondpart} The integral of $f_2(\cdot)$ (defined in \eqref{define_f2}) with respect to the sales process scaled by $n$ converges in probability to a constant, i.e. 
\begin{equation}\label{fubini_lem_2}
 \frac{1}{n} \int_{[-W, T]} f_2(x)N^n(dx) \stackrel{P}{\rightarrow} \int_{[-W, T]}   f_2(x)\nu(dx),
 \end{equation}
 where $\nu(\cdot)$ is given in \eqref{eqn:salesprocess}.
\end{enumerate}
}
\end{lem}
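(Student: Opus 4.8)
The plan is to handle both parts by rewriting the Stieltjes integral against the sales measure as a functional of the rescaled process and then invoking the weak convergence \eqref{eqn:salesprocess} together with the continuous mapping theorem. For part \eqref{firstpart}, the starting point is the observation that $f_1$, defined piecewise in \eqref{define_f1}, admits the single uniform representation
\begin{equation*}
f_1(x) = \int_{[0, W]} 1_{\{ -u \le x \le T - u \}}\, \tilde m(du), \qquad x \in [-W, T],
\end{equation*}
which one verifies by matching, in each of the three ranges of $x$, the integration window $[\max(0,-x), \min(W, T-x)]$ against the three cases in \eqref{define_f1}. Writing $G_n = (N^n - n\nu)/\sqrt{n}$, which has finite total variation as a difference of two non-decreasing functions, I would apply Fubini's theorem (legitimate since $\tilde m$ is finite and the integrand is a bounded indicator) to interchange the order of integration and obtain, with $\chi$ as in \eqref{defineg},
\begin{equation*}
\int_{[-W, T]} f_1(x)\, G_n(dx) = \int_{[0, W]} \bigl( G_n(T-u) - G_n((-u)-) \bigr)\, \tilde m(du) = \int_{[0, W]} \chi(G_n)(u)\, \tilde m(du).
\end{equation*}
It then suffices to show that the functional $H(x) = \int_{[0,W]} \chi(x)(u)\,\tilde m(du)$ from $D([-W,T])$ to $\R$ is continuous at every $x \in C([-W,T])$ in the Skorohod topology: if $x_n \to x$ with $x$ continuous, then $x_n \to x$ uniformly, and uniform convergence also forces $x_n((-u)-) \to x((-u)-) = x(-u)$ uniformly in $u$, so $\chi(x_n)(u) \to \chi(x)(u)$ uniformly and bounded convergence against the finite measure $\tilde m$ gives $H(x_n)\to H(x)$. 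Since $N^\infty$ has continuous paths almost surely by Assumption \ref{freerepl_sale}, $H$ is continuous almost everywhere under the law of $N^\infty$, and the continuous mapping theorem yields convergence to $\int_{[0,W]} \chi(N^\infty)(u)\,\tilde m(du)$. This limit is a bounded linear image of the Gaussian process $N^\infty$, hence Gaussian, and its mean $\tilde\mu$ and variance $\tilde\sigma^2$ follow by pulling $E$ and $Cov$ through the $\tilde m$-integrals by Fubini, reproducing \eqref{dotmudotsigma}.

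For part \eqref{secondpart}, I would use that $f_2$ is bounded (by $E[M([0,W])^2] < \infty$ under Assumption \ref{freerepl_claim2}) and that $N^n/n = \nu + G_n/\sqrt{n} \Rightarrow \nu$ in $D([-W,T])$; since the limit $\nu$ is deterministic, this upgrades to convergence in probability, and, the limit being continuous, the Skorohod convergence is in fact uniform, so the induced measures $N^n(dx)/n$ converge weakly to $\nu(dx)$ in probability. The integral $\int f_2\, d(N^n/n)$ then converges in probability to $\int f_2\, d\nu$ provided the discontinuity set of $f_2$ is $\nu$-null. The key bookkeeping step is to locate that set: because $M$ has no fixed atoms in the open interval $(0,W)$ by Assumption \ref{freerepl_claim1.5}, the only possible jumps of $f_1$, and hence of $f_2$, occur at $x = 0$ and $x = T - W$, where the admissible atoms of $M$ at $0$ and $W$ enter or leave the window $[\max(0,-x), \min(W, T-x)]$. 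Since $\nu$ is assumed continuous at $0$ and at $T-W$, these two points carry no $\nu$-mass, the discontinuity set of $f_2$ is $\nu$-null, and the portmanteau argument for integrals against weakly convergent measures delivers \eqref{fubini_lem_2} in probability.

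The main obstacle is the continuity argument in part \eqref{firstpart}: the functional $H$ evaluates the path both at the right endpoints $T-u$ and at the left limits $(-u)-$ and integrates these against a measure $\tilde m$ that may place atoms at $0$ and $W$, so continuity in the Skorohod topology is genuinely not automatic. The resolution is that one only needs continuity of $H$ at continuous paths, which is exactly where the almost-sure continuity of $N^\infty$ and the reduction of Skorohod convergence to uniform convergence become essential. A parallel, milder subtlety in part \eqref{secondpart} is the careful accounting of the discontinuities of $f_2$, which the continuity hypotheses on $\nu$ at $0$ and $T-W$ in Assumption \ref{freerepl_sale} are precisely designed to absorb.
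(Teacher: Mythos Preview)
Your argument for part~\eqref{firstpart} is exactly the paper's: the paper also Fubini-swaps $\int f_1\,dG_n$ into $\int_{[0,W]}\chi(G_n)(u)\,\tilde m(du)$ (it writes the swap as three separate identities on the three $x$-ranges rather than via your unified indicator, but the content is identical), defines the same functional $\xi=H$, checks continuity of $\xi$ at continuous paths using the Skorohod-to-uniform upgrade, and applies the continuous mapping theorem.

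For part~\eqref{secondpart} your strategy is right but one claim is unjustified: you assert ``the limit being continuous, the Skorohod convergence is in fact uniform,'' yet Assumption~\ref{freerepl_sale} only requires $\nu\in D([-W,T])$ to be continuous at the two points $0$ and $T-W$, not everywhere. So you cannot pass through uniform convergence. The paper avoids this by citing that Skorohod convergence of nondecreasing functions implies \emph{vague} convergence of the induced measures, and then, rather than invoking a portmanteau-type statement, it decomposes $f_2$ explicitly as a continuous function plus a step function that jumps only at $0$ and $T-W$; the continuous part integrates against vague convergence directly, while the step part reduces to convergence of $N^n/n$ on the three subintervals, which holds because $\nu$ is continuous precisely at the two break points. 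Your portmanteau route can be repaired along the same lines (drop the uniform-convergence claim, use vague convergence on the compact interval together with convergence of total mass at the endpoint $T$ to get weak convergence of measures, then apply your $\nu$-null discontinuity observation), but as written the continuity assertion about $\nu$ is a gap.
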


\begin{proof}[Proof of Lemma \ref{use_sales_conv}]

\eqref{firstpart} Step 1: We assume $2T < W$. So, $T < W -T$. First, note that, by using Fubini's theorem, we get the following three equations
\begin{align}\label{fubini1}
 \int_{[0, T]} \tilde m([0, T-x]) \left(\frac{ N^n - n \nu }{\sqrt{n}} \right) (dx) = \int_{[0, T]} \left(\frac{N^n - n \nu}{\sqrt{n}} \right) ([0, T - u]) \tilde m(du),
\end{align} 
\begin{align}\label{fubini2}
 \int_{(T-W, 0)} \tilde m([-x, T-x]) \left(\frac{N^n - n \nu}{\sqrt{n}} \right) (dx) &= \int_{(0, T]} \left(\frac{ N^n - n \nu}{\sqrt{n}} \right) ([-u, 0)) \tilde m(du) \\
 & \qquad + \int_{(T, W-T)} \left(\frac{ N^n - n \nu}{\sqrt{n}} \right) ([-u, T - u]) \tilde m(du) \nonumber \\
 & \qquad + \int_{[W-T, W)} \left(\frac{ N^n - n \nu}{\sqrt{n}} \right) ((T-W, T - u]) \tilde m(du), \nonumber
\end{align} 
and
\begin{align}\label{fubini3}
 \int_{[-W, T-W]} \tilde m([-x, W]) \left(\frac{N^n - n \nu}{\sqrt{n}} \right) (dx) = \int_{[W-T, W]} \left(\frac{ N^n - n \nu}{\sqrt{n}} \right) ([-u, T - W]) \tilde m(du).
\end{align}

Using \eqref{fubini1}, \eqref{fubini2}, \eqref{fubini3} and the definition of $\chi$ from \eqref{defineg}, we get,

\begin{align}\label{fubini4}
 \int_{[-W, T]} f_1(x) \left(\frac{ N^n - n \nu }{\sqrt{n}} \right) (dx) &= \int_{[0, W]}   \chi \left(\frac{ N^n - n \nu}{\sqrt{n}} \right)(u) \tilde m(du).
 \end{align} 
 
Step 2: We use the continuous mapping theorem to prove our result. First, define $\xi: D([-W, T]) \mapsto \R$ by
\begin{equation}\label{defineh}
\xi(x) = \int_{[0, W]} \chi(x)(u) \tilde m(du), \hskip 1 cm x \in D([-W, T]),
\end{equation}
where $\chi(\cdot)$ is defined in \eqref{defineg}. 

We show that the continuous functions of $D([-W, T])$ are continuity points of the function $\xi : x \mapsto  \int_{[0, W]} \chi(x)(u) \tilde m(du)$. The definition of $\xi$ is similar to the well-known convolution functions \citep[page 143]{feller:1971}. Suppose, $x_n \to x$ in $D([-W, T])$ in the Skorohod topology and $x$ is continuous. Then, $x_n \to x$ uniformly in $[-W, T]$ \cite[page 124]{billingsley:1999} and 
 \begin{align*}
  \Big{|} \int_{[0, W]} \chi(x_n)(u) \tilde m(du) -  \int_{[0, W]} \chi(x)(u) \tilde m(du) \Big{|} &\le  \sup_{0 \le u \le W} |\chi(x_n)(u) - \chi(x)(u) | \tilde m([0, W])\\
  &\le  \sup_{-W \le u \le T} 2|x_n(u) - x(u)| \tilde m([0, W]) \rightarrow 0,
 \end{align*}
 as $ n \to \infty$. So, the discontinuity points of $\xi(\cdot)$ are contained in $D([-W, T]) \backslash C([-W, T])$. 

 Since the limiting process $N^{\infty}(\cdot) \in C([-W, T])$ and $\xi(\cdot)$ is continuous on $C([-W, T])$, by the continuous mapping theorem \cite[page 21]{billingsley:1999},
\begin{equation*}
\xi \left(\frac{ N^n - n \nu}{\sqrt{n}} \right) \Rightarrow \xi (N^{\infty})
\end{equation*}
on $\R,$ i.e., using \eqref{defineh},
\begin{align}\label{hconv}
\int_{[0, W]} \chi \left( \frac{N^n - n \nu }{\sqrt{n}}  \right)(u) \tilde m(du) \Rightarrow \int_{[0, W]} \chi\left(N^{\infty} \right)(u) \tilde m(du).  
\end{align}
Hence, by \eqref{fubini4} and \eqref{hconv}, we have proved part \eqref{firstpart} of Lemma \ref{use_sales_conv}.

\eqref{secondpart} Using Assumption \ref{freerepl_sale} of Section \ref{freereplass}, we get
$N^n(\cdot)/n \Rightarrow \nu(\cdot)$ on $D([-W, T])$. Since Skorohod convergence implies vague convergence \citep{jagers:1972}, we get 
\begin{align}\label{skorohod_to_vague}
\frac{N^n}{n}(\cdot) \Rightarrow \nu(\cdot)
\end{align}
in $M_+([-W, T])$ (see Section \ref{notations}). By Assumption \ref{freerepl_claim1.5} of Section \ref{freereplass} and the definition of $\delta(\cdot)$ given in \eqref{define_tilde_f}, the random function $\delta(\cdot)$ is almost surely continuous at all points except at most at $T-W$ and $0$. Hence, using definition of $f_2(\cdot)$ in \eqref{define_f2}, Assumption \ref{freerepl_claim2} of Section \ref{freereplass} and the dominated convergence theorem, we get that $f_2(x)$ is discontinuous at most at two points, $T-W$ and $0.$ So, $f_2(x)$ can be written as
\begin{equation}\label{f2decomp}
f_2(x) = \left \{ \begin{array}{ll} f_{2, c+}(x) + d_1, & \hbox{if $-W \le x \le T-W,$}\\f_{2, c+}(x) + d_2, & \hbox{if $T-W < x < 0,$}\\f_{2, c+}(x) + d_3, & \hbox{if $0 \le x \le T,$}\end{array} \right.
\end{equation}
where $f_{2, c+}(x)$ is a continuous non-negative function and $d_1, d_2, d_3$ are three constants. Therefore, we get 
\begin{align}\label{f2intdecomp}
 \frac{1}{n} \int_{[-W, T]} f_2(x)N^n(dx) &=  \frac{1}{n} \int_{[-W, T]} f_{2, c+}(x)N^n(dx) + \frac{d_1}{n}N^n([-W, T-W]) \nonumber \\
 &\qquad + \frac{d_2}{n}N^n((T-W, 0)) + \frac{d_3}{n}N^n([0, T]).
\end{align}
By Assumption \ref{freerepl_sale} of Section \ref{freereplass}, $\nu(\cdot)$ is continuous at $T-W$ and $0$.  Hence, using \eqref{skorohod_to_vague} we get that the last three terms on the right side of \eqref{f2intdecomp} converge in probability to $d_1\nu([-W, T-W]), \hskip 0.1 cm d_2\nu((T-W, 0))$ and $d_3\nu([0, T])$ respectively. Also, by \eqref{skorohod_to_vague} we get 
\begin{equation*}
\frac{1}{n} \int_{[-W, T]} f_{2, c+}(x)N^n(dx) \stackrel{P}{\rightarrow}  \int_{[-W, T]} f_{2, c+}(x)\nu(dx).
\end{equation*}
Hence, using \eqref{f2decomp} we get
\begin{align*}
 \frac{1}{n} \int_{[-W, T]} f_2(x)N^n(dx) &=  \frac{1}{n} \int_{[-W, T]} f_{2, c+}(x)N^n(dx) + \frac{d_1}{n}N^n([-W, T-W]) \\
 &\qquad + \frac{d_2}{n}N^n((T-W, 0)) + \frac{d_3}{n}N^n([0, T]) \\
 &\stackrel{P}{\rightarrow} \int_{[-W, T]} f_{2, c+}(x)\nu(dx) + \frac{d_1}{n}\nu([-W, T-W])\\
 &\qquad + \frac{d_2}{n}\nu((T-W, 0)) + \frac{d_3}{n}\nu([0, T]) \\
 &= \int_{[-W, T]} f_2(x)\nu(dx).
\end{align*}
\end{proof}

\begin{proof}[Proof of Theorem \ref{numberclaims}]
Recall that $R^n_j$ denotes the total number of claims in $[0, T]$ for the $j$-th item sold defined in \eqref{define_rnj} and $R^n$ denotes the total number of claims in $[0, T]$ defined in \eqref{define_rn}. Using Assumption \ref{freerepl_claim1} of  \ref{freereplass}, we get
\begin{align} \label{claimgivensales}
R^n_j | S_j^n \stackrel{d}{=} \delta (S_j^n).
\end{align}
Step 1: From Assumption \ref{freerepl_claim1} of Section \ref{freereplass}, we also get that given the sales process $N^n(\cdot)$, the random variables $\{ R^n_j: j \ge 1\}$ are independent. The characteristic function of the centered and scaled $R^n$ given the sales process $N^n(\cdot)$ is for $\lambda \in \R,$
\begin{align*}
E^{N_n(\cdot )}&\left[ \exp \left( i \lambda n^{-1/2}\left(R^n - \int_{[-W,T]} f_1(x) N^n(dx) \right)\right)   \right] \nonumber \\
& \qquad = E^{N_n(\cdot )} \left[ \exp \left( i \lambda n^{-1/2}\sum_{ \{j : S_j^n \in [-W, T] \}} \left(R^n_j -f_1(S_j^n) \right) \right) \right]  \nonumber\\
& \qquad = \prod_{ \{j : S_j^n \in [-W, T] \}} E^{S_j^n} \left[ \exp \left( i \lambda n^{-1/2} \left(\delta( S_j^n) - f_1(S^n_j) \right) \right) \right],
\end{align*}
which, using Lemma \ref{prod_approx_lemma}, can be written as
\begin{align}\label{expoapprox}
 &\prod_{ \{j : S_j^n \in [-W, T] \}}  E^{S_j^n} \left[ 1 - \lambda^2{(2n)}^{-1} {\left(\delta(S^n_j) - f_1(S_j^n)\right)}^2  \right]  + o_p(1)\nonumber \\
& \quad =   \prod_{ \{j : S_j^n \in [-W, T] \}}  \left[ 1 - \lambda^2{(2n)}^{-1} f_2(S_j^n) \right] + o_p(1) \nonumber \\
 & \quad =  \exp \left[ - \sum_{ \{j : S_j^n \in [-W, T] \}} -\log \left[ 1 - \lambda^2{(2n)}^{-1}f_2(S_j^n)  \right]\right] + o_p(1)
 \nonumber \\
 & \quad = \exp \left[ - \int_{[-W, T]} -\log \left( 1 - \lambda^2{(2n)}^{-1} f_2(x)\right) N^n(dx)  \right]  + o_p(1). 
  \end{align}
Step 2 : Now, since $|-\log(1-x) -x | \le 2 |x|^2$ if $|x| \le \frac{1}{2}$, we get that for large enough $n,$
\begin{align}\label{unifconv1}
\Big{|} -\log \left( 1 - \frac{\lambda^2}{2n}f_2(x)\right) - \frac{\lambda^2}{2n}f_2(x) \Big{|} &\le \frac{\lambda^4}{2n^2} f_2^2(x)  \le \frac{\lambda^4}{2n^2} {\left[ E[M^2([0, W])] \right]}^2 =  \frac{C}{n^2},
\end{align}
where $C = \frac{\lambda^4}{2}{\left[ E[M^2([0, W])] \right]}^2$. Hence, from Lemma \ref{use_sales_conv}, part \eqref{secondpart} and \eqref{unifconv1} it follows that
\begin{align*}
\left(\begin{array}{c}  \int_{-W}^T \left[ -\log \left( 1 - \frac{\lambda^2}{2n} f_2(x)\right) - \frac{\lambda^2}{2n} f_2(x) \right] N^n(dx) \\  \int_{-W}^T \frac{\lambda^2}{2n} f_2(x)N^n(dx) \end{array} \right) \stackrel{P}{\rightarrow} \left(\begin{array}{c} 0 \\ \int_{-W}^T \frac{\lambda^2}{2} f_2(x)\nu(dx) \end{array} \right), 
\end{align*}
and so,
\begin{align}\label{prelim_1}
\exp \left( -\int_{-W}^T  -\log \left( 1 - \frac{\lambda^2}{2n} f_2(x)\right) N^n(dx) \right) &\stackrel{P}{\rightarrow} \exp \left( -\int_{-W}^T \frac{\lambda^2}{2} f_2(x)\nu(dx) \right) \nonumber \\
& = \exp \left( - \frac{ \lambda^2 c_2}{2} \right).
\end{align}
Therefore, using \eqref{expoapprox}, \eqref{prelim_1} and the dominated convergence theorem, we get for $\lambda \in \R$,
\begin{align}\label{condn_conv}
E\left[ \Big{|} E^{N_n(\cdot )}\left[ \exp \left( i \lambda n^{-1/2}\left(R^n - \int_{[-W,T]} f_1(x) N^n(dx) \right)\right) \right] - \exp \left( - \frac{\lambda^2 c_2}{2} \right) \Big{|} \right] \rightarrow 0.
\end{align}

Step 3: Now, we prove the joint convergence of 
$$\left( \begin{array}{c} X^n\\ Y^n\end{array}\right) := \frac{1}{\sqrt{n}} \left( \begin{array}{c} R^n - \int_{[-W,T]} f_1(x) N^n(dx) \\  \int_{[-W,T]} f_1(x) N^n(dx) - nc \end{array}\right).$$ 

Consider the joint characteristic function $E\left[ \exp\left( i \lambda X^n + i \phi Y^n \right) \right]$ for $(\lambda, \phi) \in \R^2$ and note that
\begin{align}\label{joint_conv1}
E\left[ \exp\left( i \lambda X^n + i \phi Y^n \right) \right] & = E\left[ \exp\left( i \lambda X^n + i \phi Y^n \right) - \exp\left( -\frac{\lambda^2 c_2}{2} + i \phi Y^n \right) \right] \nonumber \\
& \quad + E\left[ \exp\left( -\frac{\lambda^2 c_2}{2} + i \phi Y^n \right) \right].
\end{align}
First, we deal with the first term on the right side of \eqref{joint_conv1}. Note that
\begin{align}\label{eqn:rn-first term}
\Big{|} E &\left[ \exp\left( i \lambda X^n + i \phi Y^n \right) - \exp\left( -\frac{\lambda^2 c_2}{2} + i \phi Y^n \right) \right] \Big{|}  \nonumber\\
&= \Big{|}E\left[ E^{N^n(\cdot)}\left[ \exp\left( i \lambda X^n + i \phi Y^n \right) - \exp\left( -\frac{\lambda^2 c_2}{2} + i \phi Y^n \right) \right] \right] \Big{|} \nonumber\\
& =  \Big{|} E\left[  \exp ( i \phi Y^n) \left[E^{N^n(\cdot)}\left[\exp\left( i \lambda X^n  \right) \right] - \exp\left( -\frac{\lambda^2 c_2}{2}  \right)  \right]  \right] \Big{|}  \nonumber \\
& \le E\left[ \Big{|} E^{N^n(\cdot)}\left[\exp\left( i \lambda X^n  \right) \right] - \exp\left( -\frac{\lambda^2 c_2}{2}  \right) \Big{|} \right]  \rightarrow 0,
\end{align}
where the last convergence follows from \eqref{condn_conv}. For the second term on the right side of \eqref{joint_conv1}, observe
\begin{equation*}
Y^n = \\ \left( \int_{[-W,T]} f_1(x) N^n(dx) - nc\right)/\sqrt{n} = \int_{[-W, T]} f_1(x) \left(\frac{ N^n - n \nu}{\sqrt{n}} \right) (dx).
\end{equation*}
Using Lemma \ref{use_sales_conv}, part \eqref{firstpart}, we get
\begin{align}\label{eqn:rn-secondterm}
\lim_{n \to \infty} E\left[ \exp\left( -\frac{\lambda^2 c_2}{2} + i \phi Y^n \right) \right] = \exp\left( -\frac{\lambda^2 c_2}{2}  + i \phi \tilde \mu - \frac{\phi^2 \tilde \sigma^2 }{2} \right), 
\end{align}
where the parameters $\tilde \mu$ and $\tilde \sigma^2$ are given in \eqref{tildemutildesigma}. 
Therefore, using \eqref{joint_conv1}, \eqref{eqn:rn-first term} and \eqref{eqn:rn-secondterm}, we get
\begin{align*}
\lim_{n \to \infty} E\left[ \exp\left( i \lambda X^n + i \phi Y^n \right) \right] = \lim_{n \to \infty} E\left[ \exp\left( -\frac{\lambda^2 c_2}{2} + i \phi Y^n \right) \right] = \exp\left( -\frac{\lambda^2 c_2}{2}  + i \phi \tilde \mu - \frac{\phi^2 \tilde \sigma^2 }{2} \right).
\end{align*}
So, the joint convergence holds, i.e. 
\begin{align*}
\left( \begin{array}{c} X^n \\ Y^n \end{array} \right) \Rightarrow \left( \begin{array}{c} X \\ Y \end{array} \right),
\end{align*}
on $\R^2,$ where $X$ and $Y$ are independent, $X \sim \mathcal{N}(0, c_2)$ and $Y \sim \mathcal{N}(\tilde \mu, \tilde \sigma^2)$. Hence, 
\begin{equation*}
\sqrt{n}\left( \frac{R^n}{n} - c_1 \right) = X^n + Y^n \Rightarrow X + Y, 
\end{equation*}
which gives us the required result.
\end{proof}

\begin{proof}[Proof of Theorem \ref{mainthm}]
Let $X_i$ denote the size of the $i$-th claim which arrived during the time interval $[0, T]$. Denote, $\s_j = \sum_{i =1}^j X_i$ for all $j \ge 1$. Then, $\cost([0, T]) = \sum_{i = 1}^{R^n} X_i = \s_{R^n}$.

(1) By Assumption \ref{size_iid} and Donsker's theorem \cite[page 146]{billingsley:1999}, we know
\begin{equation*}
\frac{\s_{[n \cdot]} - [ n \cdot ] E}{\sqrt{n V} }\Rightarrow W(\cdot)
\end{equation*}
on $D([0, \infty))$, where $W(\cdot)$ is a standard Brownian motion.
Also, from Theorem \ref{numberclaims}, we know that $\sqrt{n}\left( \frac{R^n}{n} - c_1 \right)$ converges in distribution to a random variable $Y$, where $Y \sim \mathcal{N}(\tilde \mu,c_2 + \tilde\sigma^2)$. These two facts, coupled with Assumption \ref{ind_amount_time} of Section \ref{freereplass} gives that
\begin{equation*}
\left( \begin{array}{c}\frac{\s_{[n \cdot]} - [ n \cdot ] E}{\sqrt{n V} } \\ \sqrt{n}\left( \frac{R^n}{n} - c_1 \right) \end{array} \right) \Rightarrow \left( \begin{array}{c} W(\cdot) \\ Y \end{array} \right)
\end{equation*} 
on $D([0, \infty)) \times \R,$ where $W(\cdot) $ and $Y$ are independent of each other. Hence, from \citet[page 37]{billingsley:1999}, we get
\begin{equation}\label{triple_conv}
\left( \begin{array}{c}\frac{\s_{[n \cdot]} - [ n \cdot ] E}{\sqrt{n V} } \\ \frac{ R^n}{n} \\ \sqrt{n}\left( \frac{R^n}{n} - c_1 \right) \end{array} \right) \Rightarrow \left( \begin{array}{c} W(\cdot) \\ c_1 \\ Y \end{array} \right)
\end{equation} 
on $D([0, \infty)) \times \R \times \R,$ where each three components on the right of \eqref{triple_conv} are independent of each other.
Applying Theorem 3 of \cite{durrett:resnick:1977} to \eqref{triple_conv} yields
\begin{equation}\label{northmeq1}
\left( \begin{array}{c}\frac{\s_{[R^n \cdot]} - [ R^n \cdot ] E}{\sqrt{n V} } \\ \sqrt{n}\left( \frac{R^n}{n} - c_1 \right) \end{array} \right) \Rightarrow \left( \begin{array}{c} W (c_1(\cdot)) \\ Y \end{array} \right)
\end{equation}
on $D([0, \infty)) \times \R.$ Since the first component of the limit in \eqref{northmeq1} is a continuous process, we have 
\begin{equation*}
\left( \begin{array}{c}\frac{\cost([0, T]) - R^n  E}{\sqrt{n V} } \\ \sqrt{n}\left( \frac{R^n}{n} - c_1 \right) \end{array} \right) \Rightarrow \left( \begin{array}{c} W(c_1) \\ Y \end{array} \right)
\end{equation*}
on $\R \times \R$ \cite[Section 4]{durrett:resnick:1977}. So,
\begin{equation*}
\frac{\cost([0,T]) - nc_1E}{\sqrt{n V} } = \frac{\cost([0,T]) - R^n E}{\sqrt{n V} } + \frac{E}{\sqrt{V}} \sqrt{n}\left( \frac{R^n}{n} - c_1 \right) \Rightarrow W(c_1) + \frac{E}{\sqrt{V}} Y
\end{equation*}
on $\R$, which completes the proof of part \eqref{normalthm} of Theorem \ref{mainthm}.

(2)
 By Assumption \ref{size_iid} and a minor variant of the central limit theorem of heavy-tailed distributions \cite[page 218]{resnickbook:2007}, we know
\begin{equation*}
\frac{\s_{[n \cdot]} - [ n \cdot ] E}{b(n) }\Rightarrow Z_{\alpha}(\cdot)
\end{equation*}
on $D([0, \infty))$, where $Z_{\alpha}(\cdot)$ is an $\alpha$-stable L\'{e}vy motion whose characteristic function satisfies \eqref{charstablemean}. Now, applying Theorem 3 of \cite{durrett:resnick:1977} in a similar way as in the proof of \eqref{normalapprox1}, we get
\begin{equation}\label{stablethmeq1}
\frac{\s_{[R^n \cdot]} - [ R^n \cdot ] E }{b(n) } \Rightarrow  Z_{\alpha}(c_1 (\cdot)),
\end{equation}
on $D([0, \infty)).$ Since the limit process in \eqref{stablethmeq1} is continuous in probability at time 1, we have 
\begin{equation}\label{stablealmostthere}
\frac{\cost([0, T]) - R^n  E}{b(n) }  \Rightarrow Z_{\alpha}(c_1) \stackrel{d}{=} c_1^{\frac{1}{\alpha}} Z_{\alpha}(1)
\end{equation}
on $\R$. Observe,
\begin{equation}\label{stablelaststep}
\frac{\cost([0,T]) - nc_1E}{b(n) } = \frac{\cost([0,T]) - R^n E}{b(n) } + \frac{R^nE -  n c_1 E }{b(n)}.
\end{equation}
In \eqref{stablealmostthere}, we have already found the weak limit of the first term on the right side of \eqref{stablelaststep}. Now, we deal with the second term on the right side of \eqref{stablelaststep}. Notice, $b(\cdot) \in RV_{\frac{1}{\alpha}}$, and hence $x^{1/2}{(b(x))}^{-1} \in RV_{ \frac{1}{2} -\frac{1}{\alpha}}$. Since $\alpha < 2$, $\frac{1}{2} -\frac{1}{\alpha} < 0$. Hence, $\lim_{ n \to \infty} \frac{\sqrt{n}}{ b(n)} = 0$. Therefore, using this fact, along with Theorem \ref{numberclaims} and Slutsky's theorem, we get
\begin{align*}
\frac{R^nE -  nc_1E }{b(n)} = \frac{\sqrt{n} E}{b(n)} \sqrt{n} \left( \frac{R^n}{n} - c_1 \right) \stackrel{P}{\rightarrow} 0,
\end{align*}
which together with \eqref{stablealmostthere} and \eqref{stablelaststep} completes the proof of part \eqref{stablemeanexists} of Theorem \ref{mainthm}.

(3)
 By Assumption \ref{size_iid} and the central limit theorem of heavy-tailed distributions \cite[page 218]{resnickbook:2007}, we know
\begin{equation*}
\frac{\s_{[n \cdot]} - [ n \cdot ] e(n)}{b(n) }\Rightarrow Z_{\alpha}(\cdot)
\end{equation*}
on $D([0, \infty)$), where $Z_{\alpha}(\cdot)$ is an $\alpha$-stable L\'{e}vy process and characteristic function of $Z_{\alpha}(c_1)$ satisfies \eqref{charstablenomean}. Now, applying Theorem 3 of \cite{durrett:resnick:1977} in a similar way as in the proof of \eqref{normalapprox1}, we get
\begin{equation}\label{stable2thmeq1nomean}
\frac{\s_{[R^n \cdot]} - [ R^n \cdot ] e(R^n)}{b(n) } \Rightarrow  Z_{\alpha}(c_1 (\cdot))
\end{equation}
on $D([0, \infty))$ and since the limit process in \eqref{stable2thmeq1nomean} is continuous in probability at time 1, we get
\begin{equation}\label{stablealmosttherenomean}
\frac{\cost([0, T]) - R^n  e(R^n)}{b(n) }  \Rightarrow Z_{\alpha}(c_1)
\end{equation}
on $\R$. Now, notice that
\begin{equation}\label{stablelaststepnomean}
\frac{\cost([0,T]) - nc_1^{\frac{1}{\alpha}}e(n)}{b(n) } = \frac{\cost([0,T]) - R^n e(R^n)}{b(n) } + \frac{R^n e(R^n) -  nc_1^{\frac{1}{\alpha}}e(n)}{b(n)}.
\end{equation}
We have already observed in \eqref{stablealmosttherenomean} the weak convergence of the first term on the right side of \eqref{stablelaststepnomean}. We now turn to the second term. To show the convergence in probability of the second term, we consider two separate cases, namely $\alpha =1$ and $\alpha < 1$. 

$\alpha < 1$ case: Notice $e(x) = I(b(x))$ where $I(x) = \int_0^x t F(dt) \in RV_{1 - \alpha}$ \citep[ page 36, Ex. 2.5]{resnickbook:2007}. Also, $b(x) \in RV_{\frac{1}{\alpha}}$, and $\lim_{x \to \infty} b(x) = \infty.$ Therefore, from Proposition 2.6(iv) of \citet[page 32]{resnickbook:2007}, $e(\cdot) \in RV_{\frac{1}{\alpha} -1}$. So, $xe(x) \in RV_{\frac{1}{\alpha}}$. Also, from Theorem \ref{numberclaims}, we know that $R^n/n$ converges in probability to $c_1$. Hence, \cite[page 36]{resnickbook:2007}
\begin{equation}\label{stablelaststep1nomean}
\frac{R^ne(R^n)}{ne(n)} \stackrel{P}{\rightarrow} c_1^{\frac{1}{\alpha}}.
\end{equation}
Using \eqref{stablelaststep1nomean} and the fact that the sequence $\{ \frac{ne(n)}{b(n)} \}$ converges to $\frac{\alpha}{1- \alpha}$ and hence is bounded, we get that
\begin{align*}
\frac{R^ne(R^n)-  nc_1^{\frac{1}{\alpha}}e(n)}{b(n)} = \frac{n e(n)}{b(n)} \left( \frac{R^ne(R^n)}{ne(n)} - c_1^{\frac{1}{\alpha}} \right) \stackrel{P}{\rightarrow} 0,
\end{align*}
which, together with \eqref{stablealmosttherenomean} and \eqref{stablelaststepnomean} completes the proof for the case $\alpha < 1$.

$\alpha = 1$ case: In this case, we may write the second term of the right side of \eqref{stablelaststepnomean} as
\begin{align}\label{alpha1nomean}
\frac{R^ne(R^n) -  nc_1e(n)}{b(n)} &= \frac{R^ne(R^n) -  R^ne(n)}{b(n)}  + \frac{R^n e(n) -  nc_1e(n)}{b(n)} \nonumber \\
&= \frac{R^n}{n}\frac{e(R^n) -  e(n)}{b(n)/n} + \frac{\sqrt{n}e(n)}{b(n)}\sqrt{n}\left( \frac{R^n}{n} - c_1 \right).
\end{align}

First we deal with the first term on the right side of \eqref{alpha1nomean}. Note that $\bar F(y)$ is (-1)-varying and so, using a theorem of \citet{dehaan:1976} (Proposition 0.11 of  \citet[page 30]{resnick:1987}), we get that $ \int_0^x  \bar F(y)dy$ is $\Pi$-varying with auxiliary function $x \bar F(x)$. Hence, using the fact that $I(x) = \int_0^x \bar F(y) dy - x\bar F(x),$ it easily follows from the definition of $\Pi$-varying functions \cite[page 27]{resnick:1987}, that $I(x)$ is also $\Pi$-varying with auxiliary function $x \bar F(x)$. Since, $b(x) \in RV_1$, $e(x) = I(b(x))$ is also $\Pi$-varying with auxiliary function $b(x)\bar F(b(x)) = b(x)/x$ \citep{dehaan:resnick:1979b}, \citep[page 38]{resnickbook:2007}. 

Using local uniform convergence of $\Pi$-varying functions on sets away from 0 \cite[page 139, Theorem 3.1.16]{bingham:goldie:teugels:1987}, and the fact $R^n/n \stackrel{P}{\rightarrow} c_1 > 0$, we get
\begin{align*}
\frac{e(R^n) -  e(n)}{b(n)/n} = \frac{e\left(n \cdot \frac{R^n}{n}\right) - e(n)}{b(n)/n}\stackrel{P}{\rightarrow} \log c_1.
\end{align*}
Hence,
\begin{align}\label{alpha1firstnomean}
\frac{R^ne(R^n) -  R^ne(n)}{b(n)} = \frac{R^n}{n}\frac{e(R^n) -  e(n)}{b(n)/n} \stackrel{P}{\rightarrow} c_1 \log c_1.
\end{align}

Now, we turn to the second term in \eqref{alpha1nomean}. Note that $b(n) \in RV_1$ and since $e(n)$ is $\Pi$-varying, $e(n) \in RV_0$ \citep[page 128]{bingham:goldie:teugels:1987}. Therefore, $\frac{\sqrt{n} e(n)}{ b(n)} \in RV_{ - \frac{1}{2} }$, and so, $\lim_{ n \to \infty} \frac{\sqrt{n}e(n)}{ b(n)} = 0.$ Also, from Theorem \ref{numberclaims}, we know that $\sqrt{n}\left(\frac{R^n}{n} - c_1 \right)$ is asymptotically normal. Therefore, using Slutsky's theorem, we get
\begin{align}\label{alpha1secnomean}
\frac{R^ne(n) -  nc_1e(n) }{b(n)} = \frac{\sqrt{n} e(n)}{b(n)} \sqrt{n} \left( \frac{R^n}{n} - c_1 \right) \stackrel{P}{\rightarrow} 0.
\end{align}

Finally, using \eqref{alpha1nomean}, \eqref{alpha1firstnomean} and \eqref{alpha1secnomean} we get
\begin{align*}
\frac{R^ne(R^n) -  nc_1e(n)}{b(n)} \stackrel{P}{\rightarrow} c_1 \log c_1,
\end{align*}
which, together with \eqref{stablealmosttherenomean} and \eqref{stablelaststepnomean} completes the proof for the case $\alpha =1$.
\end{proof}

\begin{proof}[Proof of Theorem \ref{proratathm}]

Denote by $\cost_j([0, T])$ the warranty claims expenditures which are incurred in $[0,T]$ for the $j$-th item, sold in the time interval $[-W, T]$. In notation,
\begin{equation*}
\cost_j([0, T]) = c_br(C^n_{j, 1})\epsilon_{C_{j, 1}^n}([0, W])\epsilon_{ S_j^n + C_{j, 1}^n} ([0, T]).
\end{equation*}
So, $\cost([0, T]) = \sum_{ \{j : S_j^n \in [-W, T] \}}  \cost_j([0, T])$. Using Assumption \ref{freerepl_claim1} of Section \ref{freereplass}, we get
\begin{align*}
{(c_b)}^{-1}\cost_j([0, T]) | S_j^n \stackrel{d}{=} \delta(S_j^n) . 
\end{align*}
Now, we consider the characteristic function of $$\left( {(c_b)}^{-1}\cost([0,T]) - \int_{[-W,T]} f_1(x)N^n(dx) \right)/ \sqrt{n}$$ given the sales process $N^n(\cdot)$. Note that for $\lambda \in \R$,
\begin{align*}
E^{N^n(\cdot)} &\left[ \exp \left( i \lambda n^{-1/2} \left( {(c_b)}^{-1}\cost([0,T]) - \int_{[-W,T]} f_1(x)N^n(dx) \right) \right)  \right] \nonumber \\
&\qquad = E^{N^n(\cdot)} \left[ \exp\left( i \lambda n^{-1/2} \sum_{ \{j : S_j^n \in [-W, T] \}}  \left( {(c_b)}^{-1}\cost_j([0, T]) - f_1(S_j^n) \right)\right) \right]   \nonumber \\
& \qquad = \prod_{ \{j : S_j^n \in [-W, T] \}} E^{S_j^n}\left[ \exp \left( i \lambda n^{-1/2} \left( \delta(S_j^n) - f_1(S_j^n) \right) \right) \right].
\end{align*}

Now, we want to use the same steps as in the proof of Theorem \ref{numberclaims}. To do so, we must be able to use Lemma \ref{prod_approx_lemma} and Lemma \ref{use_sales_conv}, which need Assumptions \ref{freerepl_sale}-\ref{freerepl_claim2} of Section \ref{freereplass}. By the hypothesis of Theorem \ref{proratathm}, we have Assumptions \ref{freerepl_sale}-\ref{freerepl_claim1.5} of Section \ref{freereplass}. Since $M([0,W]) \le 1$, we know that Assumption \ref{freerepl_claim2} of Section \ref{freereplass} is also satisfied. Hence, we can apply the results of Lemma \ref{prod_approx_lemma} and Lemma \ref{use_sales_conv} here, too. Now, the proof follows exactly similar steps as in the proof of Theorem \ref{numberclaims}. So, the rest of the proof is omitted.
\end{proof}

\section{Glossary of notation}\label{notation}

\begin{align*}
S_j^n &= \hbox{time of sale of the $j$-th item in the $n$-th model,}\\
N^n(t) &= \hbox{total number of sales in $[-W, t]$ in the $n$-th model} = \sum_j \epsilon_{S_j^n}([-W, t]),\\
\nu(t) &= \hbox{the centering function of $\frac{1}{n}N^n(t),$ }\\
 N^{\infty}(\cdot) &= \hbox{the Gaussian process limit of centered and scaled $\frac{1}{n}N^n(\cdot)$, }\\
C_{j, i}^n &= \hbox{the time of the $i$-th claim for the $j$-th item sold, where we start our clock at the time} \\ & \qquad \hbox{of sale $S_j^n$,}\\
M_j^n(\cdot) &:= \sum_i \epsilon_{C_{j, i}^n} (\cdot) = \hbox{the times of claims measure for the $j$-th item sold},\\
M(\cdot) &= \hbox{the generic random measure representing times of claims for an item sold, i.e.}\\
& \quad M(\cdot) \stackrel{d}{=} M_j^n(\cdot) \hbox{ for all $j$ and all $n$.}\\ 
m(\cdot) &= E[M(\cdot)],\\
\hat m_1(\cdot) &= \frac{1}{n} \sum_{j=1}^n M_j^n(\cdot),\\
\hat m(\cdot) &= \hbox{estimate of $m(\cdot)$, maybe using a functional form},\\
R_j^n &= \hbox{ total number of claims in $[0, T]$ for the $j$-th item sold} =  \sum_i \epsilon_{ S_j^n + C_{j, i}^n} ([0, T])\epsilon_{C_{j, i}^n} ([0, W]),\\
R^n &= \hbox{ total number of claims in $[0, T]$} =  \sum_{\{j : S_j^n \in [-W, T] \}} R_j^n,\\
r(t) &= \left \{ \begin{array}{ll}  \hbox{$1,$} & \hbox{if we follow the non-renewing} \\
& \hbox{free replacement policy,}\\
\\
\hbox{the fraction of the cost refunded  if the claim} & \hbox{if we follow the non-renewing } \\ 
\hbox{comes after time t from the date of sale,} & \hbox{pro-rata policy,} \end{array} \right.\\
 \tilde m(\cdot) &= E\left[\int_{(\cdot)} r(y)M(dy) \right],\\
\delta(x) &= \left \{\begin{array}{ll}  \int_{[0, T-x] } r(y) M(dy), & \hbox{if $ 0 \le x \le T$},\\
 \int_{[-x, T-x] } r(y) M(dy), & \hbox{if $ T - W < x < 0$},\\
  \int_{[-x, W] } r(y) M(dy), & \hbox{if $ -W \le x \le T -W,$} \end{array} \right.\\
   f_1(x) & = E [\delta(x)]   =  \left \{\begin{array}{ll} \tilde m([0, T-x]), & \hbox{if $ 0 \le x \le T$},\\
 \tilde m([-x, T-x]), & \hbox{if $ T - W < x < 0$},\\
  \tilde m([-x, W]), & \hbox{if $ -W \le x \le T -W$},\end{array} \right. \\
  f_2(x) &= Var[ \delta(x)],\\
  c_1 &= \int_{-W}^T f_1(x)\nu(dx), \\
  c_2 &= \int_{-W}^T f_2(x)\nu(dx),\\
  E &= \hbox{expectation of the distribution of claim sizes, if it exists},\\
  V &= \hbox{variance of the distribution of claim sizes, if it exists},\\
  &\hbox{$\chi : D([-W, T]) \rightarrow \R^{[0,W]}$ defined as}\\
  \chi(x)(u) &= x(T-u) - x((-u)-), \\
\tilde \mu &= \int_{[0, W]} E[ \chi \left(N^{\infty} \right)(u)] \tilde m(du),\\
 \tilde \sigma^2 &= \int_{[0, W]} \int_{[0, W]} Cov \left[ \chi \left(N^{\infty} \right)(u), \chi \left(N^{\infty} \right)(v) \right] \tilde m(du)\tilde m(dv).
 \end{align*}

\bibliographystyle{plainnat}
\bibliography{bibfile}
  \end{document}